\documentclass[11pt, a4paper]{article}

\usepackage[T1]{fontenc}
\usepackage[utf8]{inputenc}
\usepackage[american]{babel}
\usepackage{url}
\usepackage{hyperref}
\usepackage{authblk}
\usepackage{setspace}
\usepackage{abstract}
\usepackage{xcolor}
\usepackage{fancyhdr}
\usepackage[top=1in, left=1in, bottom=1in, right=1in]{geometry}
\usepackage[sectionbib,round]{natbib}
\usepackage{enumerate}
\usepackage{setspace}
\usepackage{mathptmx}
\usepackage{multicol}
\usepackage{multirow}
\usepackage{caption}
\usepackage{subcaption}
\usepackage{amsfonts}
\usepackage{amsmath}
\usepackage{upgreek}
\usepackage{amsthm}
\usepackage{amssymb}
\usepackage{mathtools}
\usepackage{threeparttable}
\usepackage{graphicx}
\usepackage{float}
\usepackage{booktabs}
\usepackage{pgfplots}
\pgfplotsset{compat=1.12}
\usepackage{enumitem}
\usepackage{dirtytalk}
\usepackage{tikz}
\usetikzlibrary{positioning}

\newtheorem{proposition}{Proposition}
\newtheorem{definition}{Definition}
\newtheorem{corollary}{Corollary}
\newtheorem{lemma}{Lemma}
\newtheorem{remark}{Remark}
\newtheorem{assumption}{Assumption}

\hypersetup{
    colorlinks=true,
    allcolors=blue,
    linkcolor=blue,
    urlcolor=blue,
    citecolor=blue,
    pdfborder={0 0 0}
}

\fancypagestyle{maincontent}{
    \fancyhf{}
    
    \fancyfoot[C]{\thepage}
}

\begin{document}

\begin{titlepage}
    \title{Multiplicity of Equilibria in the War of Attrition \\ with Two-Sided Asymmetric Information}
    \author[1]{Martin Castillo-Quintana}
    \author[2]{Gianfranco Miranda-Romero}
    \affil[1]{Harris School of Public Policy, University of Chicago}
    \affil[2]{National Bureau of Economic Research}
    \date{\today}
    \maketitle
    \thispagestyle{empty}

    \begin{abstract}
        The war of attrition with two-sided asymmetric information is a foundational
        model in political economy, yet it generically admits a continuum of perfect
        Bayesian equilibria.  This paper characterizes the sources of equilibrium
        multiplicity. We identify conditions
        on the type distribution that determine which form of multiplicity arises:
        when the lower limit of the \textit{hazard potential}---the integral of the
        hazard rate normalized by type---diverges, the free parameter is the relative
        aggressiveness of strategies; when that limit is finite, the free parameter
        is the mass of types conceding immediately.  We prove that the
        Amann--Leininger payoff perturbation and the introduction of behavioral
        types---two seemingly distinct refinements---are mathematically equivalent
        and succeed in selecting a unique equilibrium if and only if the type support
        is bounded.  For unbounded supports, multiplicity persists. These results provide guidance for applied theorists: choosing
        distributions with bounded support ensures existing refinements deliver
        unique predictions.
    \end{abstract}

    \noindent\textbf{Keywords:} War of attrition; Equilibrium multiplicity;
    Equilibrium selection; Bayesian games; All-pay auctions

    \vspace*{\fill}
\end{titlepage}

\clearpage
\pagestyle{maincontent}

\section{The Strategic Problem}\label{sec:intro}

\begin{quote}
\emph{\say{What we are seeing play out here is essentially a war of attrition.}}

\hfill ---Brigadier General Steve Anderson, retired, on the \\
\null\hfill US--Israeli military escalation against Iran \citeyearpar{cnn2026generals}
\end{quote}

The war of attrition with two-sided incomplete information is a workhorse model in
political economy and game theory.  Applications span the emergence of property
rights \citep{hafer2006origins}, exit from duopolies
\citep{fudenberg1986theory,takahashi2015estimating}, interstate and civil conflicts
\citep{reich2023dynamic,fearon1994domestic,myerson2023game}, and reputational
bargaining
\citep{abreu2000bargaining,kambe1999bargaining,wolitzky2012reputational,fanning2016reputational}.

The strategic tension is elegant: each side screens the other by refusing to
concede, learning over time that remaining opponent types are stronger.  Each type
quits when the marginal cost of continued screening exceeds its marginal benefit.
This two-sided screening structure, however, generically produces multiple perfect
Bayesian equilibria \citep{nalebuff1985asymmetric,fudenberg1991game}.

The dominant fix in the literature introduces a positive mass of \say{crazy} or
\say{behavioral} types who fight forever
\citep{myerson2013game,powell2017taking,fanning2022reputational}.  This approach
simplifies equilibrium analysis and often delivers uniqueness.  Yet it abandons a
core principle of game theory: the study of rational actors in strategic contexts.

In a complementary approach, \citet{myatt2025impact} obtains uniqueness
by imposing a finite deadline on stopping times and characterizes how
asymmetries in perceived strength---ranked by hazard-rate dominance across
players' type distributions---determine which side concedes.  As the
deadline recedes, the perceived weaker player concedes immediately.  His
results extend to the hybrid all-pay auction and crazy-types specifications,
delivering uniqueness for general distributions under either modification.
Our analysis differs in focus: rather than characterizing the selected
equilibrium's properties, we characterize the \textit{sources} of
multiplicity in the unperturbed game through the hazard potential and
identify the precise conditions under which the leading perturbation-based
refinements succeed or fail.

This paper pursues a different path.  We characterize the sources of equilibrium
multiplicity without behavioral types and identify conditions under which existing
refinements---the Amann--Leininger \citeyearpar{amann1996asymmetric} payoff
perturbation and the introduction of behavioral types---succeed or fail at selecting
a unique equilibrium.

We establish three results.  First, we provide a complete characterization of
equilibrium multiplicity in terms of a single function $\Lambda$---the
\textit{hazard potential} of the type distribution (defined in
Section~\ref{sec:char})---constructed from the hazard rate of the type distribution.
The behavior of $\Lambda$ near the boundaries of the type support determines which
form of multiplicity arises: a continuum indexed by the relative aggressiveness of
strategies (when the lower limit of $\Lambda$ diverges) or a continuum indexed by
the mass conceding immediately (when the lower limit is finite).

Second, we prove that the Amann--Leininger perturbation and the behavioral-types
refinement are mathematically equivalent: they modify the equilibrium differential
equation in the same way and succeed or fail under identical conditions.

Third, we establish that both refinements select a unique equilibrium if and only if
the type support is bounded.  For unbounded supports---including the exponential and
Pareto distributions commonly used in applications---multiplicity persists under any
perturbation.

Section~\ref{sec:model} presents the model and results on equilibrium
properties.  Section~\ref{sec:examples} develops three canonical examples
illustrating distinct forms of multiplicity.  Section~\ref{sec:char} provides the
general characterization through the hazard potential.  Section~\ref{sec:refinements}
introduces the refinements and proves the main selection theorem.
Section~\ref{sec:discussion} discusses implications for applied theory and directions
for future work.

\section{The Model}\label{sec:model}

\subsection{Setup}

Two players $i \in \{1,2\}$ fight over an indivisible object.  Player $i$ values
the object at $\theta_i$, her privately known type.  Both players believe the
opponent's type is drawn independently from an absolutely continuous distribution
$F$ with support $\Theta = (\underline{\theta}, \overline{\theta})$, where
$0 \leq \underline{\theta} < \overline{\theta} \leq \infty$.

\begin{assumption}[Density Regularity]\label{ass:density}
The density $f$ is continuous and strictly positive on
$(\underline{\theta}, \overline{\theta})$.
\end{assumption}

This assumption is satisfied by all standard parametric families---including the
exponential, uniform, and Pareto distributions analyzed in
Section~\ref{sec:examples}---and is maintained throughout.

Each player simultaneously chooses a stopping time from
$\overline{\mathbb{R}}_+ = [0,\infty) \cup \{\infty\}$.  A \textit{strategy} for
player $i$ is a function $\sigma_i: \Theta \to \overline{\mathbb{R}}_+$.  For
stopping times $(a_1, a_2)$ and type $\theta_i$, payoffs are
\[
u_i(a_i, a_{-i}, \theta_i) = \begin{cases}
\theta_i - a_{-i} & \text{if } a_i > a_{-i} \\[3pt]
\dfrac{\theta_i}{2} - a_i & \text{if } a_i = a_{-i} \\[3pt]
-a_i & \text{if } a_i < a_{-i}.
\end{cases}
\]
The winner receives her prize valuation minus the loser's stopping time; the loser
pays her own stopping time; ties split the prize with each player paying her own
stopping time.  This specification corresponds to an ascending second-price all-pay
auction \citep{klemperer1999auction}: the price rises until one player concedes, and
both pay the exit price.  The cost structure---where the winner pays the loser's
stopping time---captures the feature that fighting costs accrue only until
concession.

The solution concept is Bayesian Nash equilibrium.  All results extend to perfect
Bayesian equilibrium in the continuous-time dynamic formulation.

\begin{remark}[Pure Strategies]\label{rem:pure}
This game admits no Bayesian Nash equilibrium in non-degenerate mixed strategies.
The argument proceeds in two steps that avoid circular dependence on the
pure-strategy characterization.

\textit{Step~1: Atomless stopping-time distributions.}  In any BNE---whether in
pure or mixed strategies---the distribution of each player's stopping time is
atomless on $(0, \infty)$.  Suppose Player~$i$'s stopping time places an atom of
mass $p > 0$ at some $a^* \in (0, \infty)$.  Any Player~$j$ type with $\theta_j > 0$
who would otherwise stop at $a^*$ can deviate to $a^* + \varepsilon$, beating the
atom and gaining approximately $\theta_j p - \varepsilon > 0$ for small
$\varepsilon$.  Since $F$ is atomless and $\theta_j > 0$ with positive probability,
such types exist, contradicting the optimality of the atom.  An analogous argument
(that of Lemma~\ref{lem:zero} below) shows that at most one player's stopping-time
distribution can place an atom at $a = 0$.  Neither argument presupposes
monotonicity, continuity, or purity of the underlying strategies.

\textit{Step~2: Unique pure best response.}  Given that the opponent's stopping-time
distribution $G_{-i}$ is atomless on $(0, \infty)$, the expected payoff of type
$\theta_i$ choosing stopping time $a_i > 0$ is
\[
\pi_i(a_i, \theta_i) = \theta_i\, G_{-i}(a_i)
    - \int_0^{a_i} \bigl(1 - G_{-i}(s)\bigr)\, ds.
\]
The derivative with respect to $a_i$,
$\pi_i'(a_i, \theta_i) = \theta_i\, g_{-i}(a_i) - \bigl(1 - G_{-i}(a_i)\bigr)$,
sets $\theta_i$ equal to the reciprocal of the hazard rate $h_{-i}(a_i) :=
g_{-i}(a_i)/\bigl(1 - G_{-i}(a_i)\bigr)$ of the opponent's stopping time.
Since $\pi_i(0, \theta_i) \geq 0$ and $\pi_i(a_i, \theta_i) \to -\infty$ as
$a_i \to \infty$ (the cost $\int (1-G_{-i})$ grows without bound while
$\theta_i G_{-i}(a_i) \leq \theta_i$), a global maximum exists.  Moreover,
the payoff satisfies the single-crossing property in $(\theta_i, a_i)$: for
$a' > a$ and $\theta' > \theta$,
\[
\bigl[\pi_i(a', \theta') - \pi_i(a, \theta')\bigr]
  - \bigl[\pi_i(a', \theta) - \pi_i(a, \theta)\bigr]
= (\theta' - \theta)\bigl[G_{-i}(a') - G_{-i}(a)\bigr] > 0,
\]
so the optimal stopping time is non-decreasing in $\theta_i$.  Combined with the
existence of a global maximum for each $\theta_i$, the first-order condition
pins down a unique optimal stopping time for each type.  Any mixed strategy is
therefore dominated by the corresponding pure best response.
\end{remark}

\subsection{Preliminary Results}

We collect standard results on equilibrium properties.  All proofs appear in
Appendix~\ref{app:proofs}.

\begin{lemma}[Monotonicity and Continuity; \citealt{fudenberg1986theory}]%
\label{lem:monotone}
Let $(\sigma_1, \sigma_2)$ be an equilibrium.  Wherever $\sigma_1$ and $\sigma_2$
are finite, they are (i)~non-decreasing and (ii)~continuous.
\end{lemma}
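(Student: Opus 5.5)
The plan is to prove (i) by a revealed-preference (single-crossing) argument and (ii) by combining monotonicity with the atomlessness established in Step~1 of Remark~\ref{rem:pure}. Fix an equilibrium and player $i$, and write $G_{-i}$ for the distribution of the opponent's stopping time induced by $\sigma_{-i}$ and $F$. By Step~1 of Remark~\ref{rem:pure}, $G_{-i}$ is atomless on $(0,\infty)$. Hence for $a_i>0$ the expected payoff is
\[
\pi_i(a_i,\theta_i)=\theta_i G_{-i}(a_i)-\int_0^{a_i}\bigl(1-G_{-i}(s)\bigr)\,ds .
\]
At $a_i=0$ the payoff may include a tie term, but this only adds $\theta_i/2$ times the atom at zero, which is still affine and increasing in $\theta_i$.

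For monotonicity, take types $\theta'>\theta$ with finite stopping times $a'=\sigma_i(\theta')$ and $a=\sigma_i(\theta)$. Suppose, for contradiction, that $a'<a$. Optimality of each type's choice gives
\[
\pi_i(a',\theta')\ge \pi_i(a,\theta') \quad\text{and}\quad \pi_i(a,\theta)\ge\pi_i(a',\theta).
\]
Adding these inequalities yields $(\theta'-\theta)\bigl[G_{-i}(a)-G_{-i}(a')\bigr]\le 0$, so $G_{-i}$ is flat on $[a',a]$. On an interval where $G_{-i}$ is flat, the opponent never stops, so waiting longer only adds cost: $\pi_i(\cdot,\theta)$ is strictly decreasing there, with slope $-(1-G_{-i})<0$ whenever $G_{-i}(a')<1$. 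Then type $\theta$ strictly prefers $a'$ to $a$, which is a contradiction. The degenerate case $G_{-i}(a')=1$ makes $a$ and $a'$ payoff-equivalent winning times. Here I will use the paper's convention that equilibrium strategies are the unique pure best responses from Step~2 of Remark~\ref{rem:pure}, or alternatively select the smallest maximizer, to rule it out. This shows $\sigma_i$ is non-decreasing on the region where it is finite.

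For continuity, suppose $\sigma_i$ jumps at $\hat\theta$ in the finite region, say $\lim_{\theta\uparrow\hat\theta}\sigma_i=b<c=\lim_{\theta\downarrow\hat\theta}\sigma_i$. Monotonicity implies that no type of $i$ stops in $(b,c)$, except possibly $\hat\theta$ itself, which is a null set because $F$ is atomless. So $G_i$ is flat on $(b,c)$. Then every type of player $-i$ that stops in $(b,c]$ would strictly gain by stopping at $b$ instead, since this saves cost without lowering the winning probability. Hence $G_{-i}$ is also flat on $(b,c)$, and continuity of $G_{-i}$ extends this to $[b,c]$.

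But then for types of $i$ just above $\hat\theta$, stopping at $c$ is strictly worse than stopping at $b$. Indeed, $\pi_i(c,\theta)-\pi_i(b,\theta)=-(c-b)\bigl(1-G_{-i}(b)\bigr)<0$ unless $G_{-i}(b)=1$. That contradicts the optimality of stopping times arbitrarily close to $c$ for $\theta\downarrow\hat\theta$, by continuity of $\pi_i$ in $a$. The case $G_{-i}(b)=1$ means player $-i$ has finished stopping by $b$, so $i$'s types above $\hat\theta$ would choose $b$ as well (again using the minimal-maximizer convention), which contradicts $c>b$.

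The main obstacle is this gap argument for continuity, specifically the endpoint and tie issues at $a=0$ and the possibility that the opponent's support is exhausted ($G_{-i}=1$). In those cases payoffs are flat rather than strictly decreasing, so I expect I will need to invoke the uniqueness of best responses from Remark~\ref{rem:pure} explicitly to exclude indifference-driven jumps.
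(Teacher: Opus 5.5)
Your proof is correct and has the same two-part structure as the paper's: a revealed-preference/single-crossing comparison for (i) and a gap argument for (ii). It is tighter where the paper's sketch is thin. In (ii) the paper stops once Player~2 places no stopping times in $(m,m+n)$ and then asserts a contradiction with monotonicity of $\sigma_2$. That does not follow, since a non-decreasing function may skip an interval of values. Your second round is the step that actually closes the argument: flatness of $G_{-i}$ on $[b,c]$ makes stopping near $c$ strictly worse than stopping at $b$ for the types of $i$ just above $\hat\theta$.

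Two details remain.
\begin{itemize}
\item Strictness in your first round needs $G_i(b)\le F(\hat\theta)<1$. This holds because the types above $\hat\theta$ stop at or after $c$.
\item When $b=0$, where $i$ typically has an atom, compare with a small time $\eta\in(0,a)$ rather than with $0$ itself, because the tie term makes $0$ worse than $0^+$. The computation is otherwise unchanged.
\end{itemize}

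Your caveat about $G_{-i}=1$ is not a technicality the paper handles: without a selection rule the lemma is false. Take $\overline{\theta}<\infty$ and the profile $\sigma_1\equiv 0$, $\sigma_2(\theta)=-\overline{\theta}\ln F(\theta)$. This is a BNE. Player~2's stopping time is exponential with mean $\overline{\theta}$. Any $a\in(0,\infty)$ therefore gives a Player~1 type $(\theta-\overline{\theta})(1-e^{-a/\overline{\theta}})<0$, and $a=\infty$ gives $\theta-\overline{\theta}<0$, so stopping at $0$ is optimal. Every Player~2 type wins at zero cost. Yet $\sigma_2$ is finite and strictly decreasing.

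So state the result under your smallest-maximizer convention, or modulo payoff-irrelevant choices beyond the opponent's last stopping time. Do not appeal to Step~2 of Remark~\ref{rem:pure}. Its claim that $\pi_i\to-\infty$ fails whenever the opponent's stopping time has finite mean, and the example above refutes its uniqueness assertion.
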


Monotonicity follows because higher-valuation types value winning more: if type
$\theta$ optimally bids $a^*$, type $\theta + \varepsilon$ cannot optimally bid
below $a^*$.  Continuity follows because a jump in $\sigma_1$ creates an interval of
stopping points that no Player~1 type selects; Player~2 types just above this
interval pay the full gap for a negligible gain in winning probability, contradicting
optimality.

\begin{lemma}[Strict Monotonicity in the Interior]\label{lem:strict}
Let $(\sigma_1, \sigma_2)$ be an equilibrium.  If $\theta < \theta'$ with
$0 < \sigma_i(\theta)$ and $\sigma_i(\theta') < \infty$ for $i = 1,2$, then
$\sigma_i(\theta) < \sigma_i(\theta')$.
\end{lemma}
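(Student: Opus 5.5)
We argue by contradiction. By Lemma~\ref{lem:monotone}, $\sigma_i$ is non-decreasing and continuous wherever it is finite. So a failure of strict monotonicity means there exist $\theta<\theta'$ with $0<\sigma_i(\theta)=\sigma_i(\theta')=:a^*<\infty$. Monotonicity then forces every type in $[\theta,\theta']$ to stop at $a^*$. Since Assumption~\ref{ass:density} gives $f>0$ on the support, this interval has probability $F(\theta')-F(\theta)>0$. Player~$i$'s stopping-time distribution therefore places an atom of positive mass at $a^*\in(0,\infty)$. This contradicts Step~1 of Remark~\ref{rem:pure}, which shows that stopping-time distributions are atomless on $(0,\infty)$ in any equilibrium.

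To make the argument self-contained rather than leaning entirely on the Remark, we will spell out the atom-breaking deviation for the opponent $j$. Take any type $\theta_j>0$ of Player~$j$ whose equilibrium stopping time $a_j$ lies in a small neighborhood of $a^*$, say in $(a^*-\delta,a^*]$. Compare its payoff with the deviation to $a^*+\varepsilon$. The deviation wins against the atom, gaining approximately $\theta_j\,(F(\theta')-F(\theta))$ (or at least half of that if $a_j=a^*$ produced a tie). Its extra cost is bounded by $\delta+\varepsilon$, since the expected payment $\int_{a_j}^{a^*+\varepsilon}(1-G_i(s))\,ds$ over the added interval is at most $\delta+\varepsilon$. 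Choosing $\delta,\varepsilon$ small therefore makes the deviation profitable.

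The remaining task is to ensure that such Player~$j$ types actually exist. The hypothesis $0<\sigma_j(\theta)$ and $\sigma_j(\theta')<\infty$, combined with continuity and monotonicity of $\sigma_j$, will be used for this. If no $j$-type stops in $(a^*-\delta,a^*]$, then Player~$j$'s stopping distribution assigns zero mass there. We then look at the $i$-types in the atom: they face no chance of beating any opponent in $(a^*-\delta,a^*)$, so they could lower their stopping time to $a^*-\delta$ and save cost. This saving is strictly positive provided Player~$j$ has not conceded with probability one by time $a^*-\delta$. That holds because $j$-types with large values stop later; this is where $\sigma_j(\theta')<\infty$ and the positivity of $\sigma_j(\theta)$ matter, since they rule out all of $j$'s mass being at $0$ or at $\infty$ in the relevant region. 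Either way we obtain a profitable deviation, and hence a contradiction.

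The main obstacle is this case analysis showing that Player~$j$ has mass near $a^*$. It requires care that the gap argument (the continuity reasoning of Lemma~\ref{lem:monotone}) applies at exactly this point. I would first try to invoke the gap/continuity logic directly: an atom of $i$ at $a^*$ combined with a gap of $j$ just below $a^*$ makes stopping at $a^*$ strictly dominated by stopping slightly earlier for the atom types. This gives the contradiction in the remaining case.
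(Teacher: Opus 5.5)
\textbf{Where the proposal follows the paper.} Your skeleton matches the paper's proof. Flatness of $\sigma_i$ on $[\theta,\theta']$ creates an atom of mass $p=F(\theta')-F(\theta)$ at $a^*\in(0,\infty)$. Opponent types stopping at or just below $a^*$ then leap over it, and a gap/continuity argument finishes. The paper's proof simply asserts that excluding $K$ from the range of $\sigma_2$ ``contradicts continuity.'' That tacitly assumes $\sigma_2$ takes values on both sides of $K$. Two of your cases are fine: $j$ has mass in every $(a^*-\delta,a^*]$ (once you check that the relevant $j$-types have $\theta_j$ bounded away from $0$), and $j$ has a gap there but $\Pr(a_j>a^*)>0$.

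\textbf{The gap.} The remaining case is open: $j$ has a gap below $a^*$ and $\Pr(a_j>a^*)=0$, i.e.\ all of $j$'s stopping times lie below some $b<a^*$. Then the atom types win with certainty and are indifferent among all stopping times above $b$, so lowering the bid saves nothing. Your justification does not exclude this case. Monotonicity of $\sigma_j$ says nothing about stopping \emph{after} $a^*$, the condition $\sigma_j(\theta')<\infty$ is irrelevant, and mass at $\infty$ would actually help you.

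The case is not vacuous. Take $F=U(0,1)$ with $\sigma_i\equiv 2$ and $\sigma_j\equiv 0$. This is a Bayesian Nash equilibrium: $i$ always wins at price $0$, and any positive stopping time of $j$ either loses or wins at price $2>\theta_j$. Here $\sigma_i$ is flat at a positive finite level. The same example rules out your first-paragraph shortcut through Step~1 of Remark~\ref{rem:pure}. That step only shows that no opponent type sits \emph{exactly} at the atom, and its conclusion fails here.

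\textbf{How to close it.} The opponent-side hypothesis $\sigma_j(\theta)>0$ must do real work, and your proposal never uses it.
\begin{enumerate}[label=(\arabic*)]
  \item Let $b$ be the top of $j$'s stopping times. Then $b\ge\sigma_j(\theta)>0$ and $1-G_j(a)>0$ for $a<b$.
  \item An $i$-type stopping at $a<b$ gains at least $(1-G_j(a))\bigl(\theta_i-(b-a)\bigr)$ by jumping into $(b,a^*)$. Hence only $i$-types with $\theta_i<\eta$ can stop in $(b-\eta,b)$.
  \item Suppose that for some small $\eta$ no $i$-type stops in $(b-\eta,b)$; the usual leap/tie argument also excludes an atom of $i$ at $b$. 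Then $j$-types stopping in $(b-\eta,b]$ strictly gain by lowering their stopping time slightly. They lose no wins and still lose to the atom at $a^*$ with probability at least $p$. This contradicts the definition of $b$.
  \item Otherwise, by monotonicity every $i$-type stops at or above $b$. Then $j$'s positive stopping times lose for sure, so $j$ would rather concede at $0$. This contradicts $\sigma_j(\theta)>0$.
\end{enumerate}
Without this step the proposal does not prove the lemma.
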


If $\sigma_1$ were flat at level $K$ over an interval, a positive mass of
Player~1 types would tie at $K$.  Player~2 could bid $K + \varepsilon$, beating this
entire mass for negligible extra cost.  But then $K$ is not in the range of
$\sigma_2$, contradicting continuity.

\begin{lemma}[At Most One Side Concedes at Zero; \citealt{hendricks1988war}]%
\label{lem:zero}
At most one player concedes immediately with positive probability:
\[
\min\!\left\{\sup_{\sigma_1(\theta)=0} \theta,\;
             \sup_{\sigma_2(\theta)=0} \theta\right\} = \underline{\theta}.
\]
\end{lemma}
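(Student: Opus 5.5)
We argue by contradiction. Suppose both players concede immediately with positive probability, so that for some $\theta^* > \underline{\theta}$ we have $\sigma_1(\theta) = 0$ and $\sigma_2(\theta) = 0$ for all $\theta \in (\underline{\theta}, \theta^*)$. Here we use monotonicity from Lemma~\ref{lem:monotone}: the set $\{\theta : \sigma_i(\theta) = 0\}$ is an initial segment of $\Theta$. Under Assumption~\ref{ass:density}, Player~$1$'s stopping time then has an atom at $0$ of mass $p_1 = F(\theta_1^*) > 0$, where $\theta_1^* := \sup_{\sigma_1(\theta)=0}\theta$. Likewise Player~$2$'s has an atom of mass $p_2 = F(\theta_2^*) > 0$. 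The goal is to show that some type concentrated at zero has a strictly profitable deviation.

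The key step is to compare, for a type $\theta_j$ of Player~$j$ who stops at $0$, the payoff from stopping at $0$ with the payoff from stopping at a small $\varepsilon > 0$. Stopping at $0$ yields
\[
\frac{\theta_j}{2}\, p_i + 0 \cdot (1 - p_i) = \frac{\theta_j p_i}{2},
\]
since the type ties with the atom and otherwise loses at zero cost. Deviating to $\varepsilon$ yields
\[
\theta_j\bigl[p_i + G_i(\varepsilon) - G_i(0)\bigr] - \int_0^{\varepsilon}\bigl(1 - G_i(s)\bigr)\,ds \;\geq\; \theta_j p_i - \varepsilon,
\]
using the payoff expression from Remark~\ref{rem:pure} together with the fact that the winner pays only the loser's stopping time. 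The gain is therefore at least $\theta_j p_i/2 - \varepsilon$. This is strictly positive for $\varepsilon < \theta_j p_i / 2$ whenever $\theta_j > 0$.

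Because $\underline{\theta} \geq 0$ and $\theta^* > \underline{\theta}$, the interval $(\underline{\theta}, \theta^*)$ contains types with $\theta_j > 0$. Any such type strictly prefers $\varepsilon$ to $0$, which contradicts $\sigma_j(\theta_j) = 0$. Hence at least one of the two suprema equals $\underline{\theta}$. Since each supremum is at least $\underline{\theta}$ by convention (the supremum of an empty set is taken to be $\underline{\theta}$), the minimum equals $\underline{\theta}$, which is the claimed identity.

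The main delicacy is the tie-breaking accounting at $0$. One has to confirm that the tie payoff $\theta/2 - 0$ applies against the whole atom, and that winning against the atom by moving to $\varepsilon$ costs only the loser's stopping time, which is $0$. The remaining cost, incurred against non-atom opponents, is bounded by $\varepsilon$. A second point is that one type deviating profitably already contradicts equilibrium in the Bayesian Nash sense; here we need type-by-type optimality, so best responses must hold for every type, or at least almost every type. Density positivity from Assumption~\ref{ass:density} gives a positive-measure set of profitable deviators, which handles the almost-every-type version.
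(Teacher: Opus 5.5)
Your proposal is correct and follows the same argument as the paper: a type who stops at $0$ deviates to a small $\varepsilon>0$, beats the opponent's atom at zero, and earns a strictly positive gain. Your version is slightly more careful than the paper's sketch in two places. You invoke Lemma~\ref{lem:monotone} to turn a supremum above $\underline{\theta}$ into an atom of positive mass. You also net out the tie payoff $\theta_j p_i/2$ already earned at zero, so the gain is bounded below by $\theta_j p_i/2-\varepsilon$ rather than the paper's stated $\theta_2 p-\varepsilon$.
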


If Player~1 has mass $p > 0$ at zero and Player~2 type $\theta_2$ also concedes at
zero, Player~2 can deviate to $\varepsilon$, beating all zero-types of Player~1 and
gaining approximately $\theta_2 p - \varepsilon > 0$ for small $\varepsilon$, a
contradiction.

\begin{lemma}[At Most One Side Fights Forever]\label{lem:infinity}
At most one player has a positive mass of types choosing $\infty$, in the sense that
\[
\max\{\overline{\theta}_1, \overline{\theta}_2\} = \overline{\theta},
\]
where $\overline{\theta}_i := \inf\{\theta : \sigma_i(\theta) = \infty\}$ is the
lowest type of player $i$ that fights forever.
\end{lemma}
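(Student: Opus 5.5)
We argue by contradiction and treat the statement as the mirror image of Lemma~\ref{lem:zero}, with the atom at $a=\infty$ in place of the atom at $a=0$. By Lemma~\ref{lem:monotone}(i), each $\sigma_i$ is non-decreasing, so $\{\theta:\sigma_i(\theta)=\infty\}$ is an upper interval $(\overline{\theta}_i,\overline{\theta})$, up to its endpoint. Player~$i$ therefore has a positive mass of types fighting forever if and only if $\overline{\theta}_i<\overline{\theta}$. Suppose, contrary to the claim, that $\overline{\theta}_1<\overline{\theta}$ and $\overline{\theta}_2<\overline{\theta}$. Then both players place positive probability, $p_i:=1-F(\overline{\theta}_i)>0$ (positive by Assumption~\ref{ass:density}), on the stopping time $\infty$.

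The key step is to show that a type of either player who chooses $\infty$ obtains payoff $-\infty$. Take Player~1 type $\theta_1>\overline{\theta}_1$. Against every Player~2 type with $\sigma_2(\theta_2)=\infty$, the payoff is the tie payoff $\theta_1/2-\infty$, or in the continuous-time formulation the limit of the running cost $-a$ as $a\to\infty$. This event has probability $p_2>0$, and the payoff on the complementary event is bounded above by $\theta_1$. Hence
\[
\pi_1(\infty,\theta_1)\le \theta_1 - p_2\cdot\lim_{a\to\infty} a=-\infty .
\]
To make this rigorous without giving meaning to $\infty-\infty$, I will compute the payoff of stopping at a finite $a$ and let $a\to\infty$. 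Using the expression in Remark~\ref{rem:pure},
\[
\pi_1(a,\theta_1)=\theta_1 G_2(a)-\int_0^a\bigl(1-G_2(s)\bigr)\,ds ,
\]
and $1-G_2(s)\ge p_2$ for all finite $s$. So $\pi_1(a,\theta_1)\le \theta_1-p_2 a\to-\infty$.

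The deviation is then immediate. Stopping at $a=0$ yields $\pi_1(0,\theta_1)\ge 0$, as noted in Remark~\ref{rem:pure}. Choosing $\infty$ is therefore strictly dominated for every type $\theta_1>\overline{\theta}_1$, which contradicts equilibrium. The identical argument applies to Player~2, although a contradiction for one player already suffices. Hence at least one $\overline{\theta}_i$ equals $\overline{\theta}$, which gives $\max\{\overline{\theta}_1,\overline{\theta}_2\}=\overline{\theta}$.

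I expect the main obstacle to be the treatment of the payoff at $a_i=a_{-i}=\infty$, which the payoff table leaves formally undefined. I will handle it by adopting the convention implicit in Remark~\ref{rem:pure}: $\pi_i(\infty,\theta_i)=\lim_{a\to\infty}\pi_i(a,\theta_i)$, which is the natural continuous-time interpretation in which costs accrue until concession. If one instead insists on the literal table, the argument adapts as follows. A type choosing $\infty$ against an opponent atom at $\infty$ is compared with a large finite deviation that loses only the tie, and it then suffices that continuation against an opponent who never concedes is worthless. I will remark that, in either convention, the positivity of $p_{-i}$ is what drives the cost to infinity, and that Assumption~\ref{ass:density} is used only to pass from $\overline{\theta}_i<\overline{\theta}$ to $p_i>0$.
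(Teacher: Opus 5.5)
Your proposal is correct and is essentially the paper's argument: if both players put positive mass on $\infty$, a type choosing $\infty$ ties at $\infty$ with positive probability and earns $-\infty$, so it is beaten by a finite stopping time. Your explicit bound $\pi_1(a,\theta_1)\le\theta_1-p_2a$ and the comparison with $a=0$ make this precise. One small caveat, which the paper shares: passing from $\overline{\theta}_i<\overline{\theta}$ to $p_i>0$ uses monotonicity on the region where $\sigma_i=\infty$, and Lemma~\ref{lem:monotone}(i) as stated covers only finite values, so the paper's ``in the sense that'' identification is being used implicitly.
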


If both players had positive-mass sets of types choosing $\infty$, any such type
would tie at $\infty$ with positive probability.  Since the tie payoff is
$\theta_i/2 - \infty = -\infty$, this is dominated by any finite stopping point.

\begin{lemma}[Differentiability]\label{lem:diff}
Let $(\sigma_1, \sigma_2)$ be an equilibrium.  Then $\sigma_1$ and $\sigma_2$ are
differentiable wherever they take values in $(0, \infty)$.
\end{lemma}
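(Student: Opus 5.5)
The approach is to recover differentiability from the first-order condition, using the continuity and strict monotonicity already established. Fix an equilibrium and an interval of types on which $\sigma_1,\sigma_2$ take values in $(0,\infty)$. By Lemma~\ref{lem:monotone} and Lemma~\ref{lem:strict}, each $\sigma_j$ is continuous and strictly increasing there. It therefore has a continuous, strictly increasing inverse $\phi_j := \sigma_j^{-1}$ on the corresponding interval of stopping times, and the opponent's stopping-time distribution is $G_j(a) = F(\phi_j(a))$. Type $\theta$ of player $i$ choosing $a$ then earns
\[
\pi_i(a,\theta) = \theta\, F(\phi_j(a)) - \int_0^{a}\bigl(1-F(\phi_j(s))\bigr)\,ds ,
\]
and $\sigma_i(\theta)$ maximizes this in $a$.

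\textbf{Step 1 (Lipschitz bounds on the inverse).} Take $a<a'$ in the range of $\sigma_i$, and let $\theta=\phi_i(a)$, $\theta'=\phi_i(a')$. Optimality of $a$ for $\theta$ against $a'$ gives
\[
\theta\,[F(\phi_j(a'))-F(\phi_j(a))] \le \int_a^{a'}\bigl(1-F(\phi_j)\bigr)\,ds \le (a'-a)\bigl(1-F(\phi_j(a))\bigr).
\]
Optimality of $a'$ for $\theta'$ against $a$ gives the reverse inequality with $\theta'$ and $1-F(\phi_j(a'))$. Dividing by $a'-a$ shows that the difference quotients of $F\circ\phi_j$ are squeezed between $\bigl(1-F(\phi_j(a'))\bigr)/\theta'$ and $\bigl(1-F(\phi_j(a))\bigr)/\theta$. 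Both bounds converge to $\bigl(1-F(\phi_j(a))\bigr)/\phi_i(a)$ as $a'\downarrow a$, and symmetrically from the left, by continuity of $\phi_i$ and $\phi_j$. This step needs $\theta>0$, which holds because types with $\sigma_i>0$ must have positive value. Hence $F\circ\phi_j$ is differentiable on the interior range with
\[
\frac{d}{da}F(\phi_j(a)) = \frac{1-F(\phi_j(a))}{\phi_i(a)},
\]
and this derivative is continuous.

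\textbf{Step 2 (strategies).} By Assumption~\ref{ass:density}, $F$ is $C^1$ with $f>0$, so $F^{-1}$ is $C^1$ on $(0,1)$. Therefore $\phi_j = F^{-1}\circ(F\circ\phi_j)$ is differentiable, with
\[
\phi_j'(a) = \frac{1-F(\phi_j(a))}{f(\phi_j(a))\,\phi_i(a)} > 0 .
\]
Since this derivative is strictly positive and finite, the inverse function theorem for monotone functions gives differentiability of $\sigma_j$ at $\theta=\phi_j(a)$, with $\sigma_j'(\theta) = 1/\phi_j'(\sigma_j(\theta))$. Running the same argument with the roles of $i$ and $j$ swapped covers both players.

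\textbf{Main obstacle.} The key difficulty is Step 1. It requires that the domain sets align: a common stopping time $a$ must lie in the interior of both players' ranges. The range of $\sigma_i$ on types with interior stopping times is an interval, because $\sigma_i$ is continuous and monotone on the type interval. That interval must coincide with the range of $\sigma_j$ locally. If player $j$ had no types stopping in some subinterval, $G_j$ would be flat there, and player $i$ types would strictly prefer the left endpoint, contradicting strict monotonicity. I would prove this matching of ranges first, as a small claim using Lemma~\ref{lem:strict} and the gap argument behind Lemma~\ref{lem:monotone}(ii).

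The argument also has to handle the endpoints where $\sigma_i$ reaches $0$ or $\infty$. The statement claims differentiability only where values lie in $(0,\infty)$, so one-sided limits at the boundaries are not needed.
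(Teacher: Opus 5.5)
Your core argument is correct, and it takes a different and tighter route than the paper. The paper invokes Lebesgue's theorem to get differentiability almost everywhere and reads off the first-order condition at those points. It then derives a local Lipschitz bound from the single inequality given by optimality of $a_L$ for type $\theta_L$, and appeals to Picard--Lindel\"of and continuity to upgrade to differentiability everywhere. You combine that inequality with its mirror image (optimality of $a'$ for $\theta'$ against $a$) and squeeze the difference quotients of $G_j$. This gives differentiability at \emph{every} point of the common active range, with the explicit derivative $(1-G_j(a))/\phi_i(a)$. The first-order condition behind \eqref{eq:kode}--\eqref{eq:sigmaode} is thus an output of your argument, not something available only at points of differentiability. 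The gain is substantive. The one-sided bound by itself only makes $\sigma_j^{-1}$ Lipschitz (a lower bound on how fast $\sigma_j$ grows), not $\sigma_i$, and passing from almost everywhere to everywhere requires absolute continuity. Your second inequality supplies exactly the missing upper control, and the squeeze dispenses with both the measure-theoretic step and the ODE-uniqueness step.

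The step that does not go through as sketched is the range matching. Your claim that player-$i$ types strictly prefer the left endpoint of a stretch where $G_j$ is flat holds only when the flat level is below one. Above the supremum of player $j$'s stopping times, where $G_j=1$, the winner pays the loser's time, so $\pi_i(\cdot,\theta)$ is constant and player-$i$ types are indifferent; nothing forces alignment.

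This is not a technicality. Take $F$ uniform on $(0,1)$, let every player-1 type concede at $0$, and let $\sigma_2(\theta)=1+\theta+\max\{0,\theta-\tfrac12\}$. Player 2 wins at any positive time and pays nothing. Any deviation $a>0$ by player 1 either loses (paying $a$) or wins at cost $\sigma_2>1>\theta_1$. This is a BNE in which $\sigma_2$ takes values in $(0,\infty)$ yet has a kink at $\tfrac12$. So the lemma has to be read on the common active region. The paper's proof does this implicitly by working on $[\theta_L,\theta_H]\subset(\underline{\theta}_1,\overline{\theta}_1)$, where $k$ is defined.

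Even there you must still exclude the configuration in which $G_i$ reaches $1$ at a finite $\bar a$ while player $j$ keeps stopping on both sides of $\bar a$. Your Step~1 handles this. On the overlap below $\bar a$ it gives $\frac{d}{da}\ln\bigl(1-G_i(a)\bigr)=-1/\phi_j(a)\ge -1/\phi_j(a_0)$ for $a\ge a_0$. Hence $1-G_i$ stays bounded away from zero up to $\bar a$, contradicting $G_i(\bar a)=1$ together with continuity of $G_i$ there. Your strict-preference argument does cover interior gaps, and also a flat stretch of $G_i$ below one caused by mass at $\infty$.
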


Since $\sigma_i$ is monotone and continuous on the interior
(Lemmas~\ref{lem:monotone}--\ref{lem:strict}), it is differentiable almost
everywhere by Lebesgue's theorem.  On each interval of differentiability, the
first-order conditions define an ODE whose right-hand side is locally Lipschitz
continuous under Assumption~\ref{ass:density}.  By the Picard--Lindel\"of theorem,
the solution to this ODE is unique and $C^1$ on each such interval.  Local Lipschitz
continuity of $\sigma_i$ (established in the proof; see Appendix~\ref{app:proofs})
then implies that the right-hand side of the ODE is locally Lipschitz in $\theta$
as well, so the unique $C^1$ solution extends continuously across any isolated
non-differentiable point, yielding differentiability on the entire interior.

\subsection{The Type-to-Type Mapping}\label{sec:k}

The preceding lemmas establish that in any equilibrium, strategies are strictly
increasing and differentiable in the interior.

\begin{assumption}\label{ass:wlog}
Without loss of generality, only Player~1 may concede at zero with positive
probability:
\[
\underline{\theta}_1 := \sup\{\theta : \sigma_1(\theta) = 0\} \geq \underline{\theta},
\qquad
\sup\{\theta : \sigma_2(\theta) = 0\} = \underline{\theta}.
\]
This is without loss by Lemma~\ref{lem:zero} and relabeling of players.
\end{assumption}

Recall that $\overline{\theta}_i := \inf\{\theta : \sigma_i(\theta) = \infty\}$
denotes the lowest type of player $i$ that fights forever (Lemma~\ref{lem:infinity}).
Following \citet{amann1996asymmetric}, define the \textit{type-to-type mapping}
\[
k(\theta, \underline{\theta}_1) := \sigma_2^{-1}(\sigma_1(\theta, \underline{\theta}_1)),
\quad \theta \in (\underline{\theta}_1, \overline{\theta}_1),
\]
which maps each type $\theta \in (\underline{\theta}_1, \overline{\theta}_1)$ of
Player~1 to the type $k(\theta, \underline{\theta}_1) \in (\underline{\theta},
\overline{\theta}_2)$ of Player~2 that chooses the same stopping point.  By
construction, $k(\underline{\theta}_1, \underline{\theta}_1) = \underline{\theta}$.\footnote{Throughout, $g(\underline{\theta})$ and $g(\overline{\theta})$ denote the right-hand
limit $\lim_{\theta \downarrow \underline{\theta}} g(\theta)$ and the left-hand limit
$\lim_{\theta \uparrow \overline{\theta}} g(\theta)$, respectively.}

\begin{lemma}[ODE Characterization; \citealt{amann1996asymmetric}]\label{lem:ode}
In any equilibrium, the
type-to-type mapping $k(\cdot, \underline{\theta}_1)$ and boundary value
$\underline{\theta}_1$ satisfy
\begin{align}
\frac{\partial k(\theta, \underline{\theta}_1)}{\partial \theta}
    &= \frac{k(\theta,\underline{\theta}_1)\bigl(1 - F(k(\theta,\underline{\theta}_1))\bigr)}%
           {f(k(\theta,\underline{\theta}_1))}
      \cdot \frac{f(\theta)}{\theta\bigl(1 - F(\theta)\bigr)}, \label{eq:kode}\\[6pt]
\frac{\partial \sigma_1(\theta, \underline{\theta}_1)}{\partial \theta}
    &= \frac{k(\theta, \underline{\theta}_1)\, f(\theta)}{1 - F(\theta)}, \label{eq:sigmaode}
\end{align}
with $\sigma_1(\theta, \underline{\theta}_1) = \sigma_2(k(\theta,\underline{\theta}_1))$ and
$k(\underline{\theta}_1, \underline{\theta}_1) = \underline{\theta}$.
\end{lemma}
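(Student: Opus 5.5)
We derive the two equations from the first-order conditions of the interior types, using the opponent's stopping-time distribution induced through the monotone strategies. Fix an equilibrium satisfying Assumption~\ref{ass:wlog}. By Lemmas~\ref{lem:monotone}, \ref{lem:strict} and \ref{lem:diff}, on the interior range $(0,\infty)$ each $\sigma_i$ is continuous, strictly increasing and differentiable, so $\sigma_2^{-1}$ is well defined there and $k$ is differentiable.

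\emph{Step 1: stopping-time distributions.} Write $G_j(a)=\Pr(\sigma_j(\theta_j)\le a)$. For $a$ in the interior range of $\sigma_2$ we have $G_2(a)=F(\sigma_2^{-1}(a))$. Likewise $G_1(a)=F(\sigma_1^{-1}(a))$, which includes the atom $F(\underline{\theta}_1)$ at zero. The payoff expression in Remark~\ref{rem:pure} applies because both distributions are atomless on $(0,\infty)$. A type of Player~1 with $\sigma_1(\theta)=a\in(0,\infty)$ chooses an interior optimum, so its first-order condition gives $\theta\, g_2(a)=1-G_2(a)$. Symmetrically, a Player~2 type $k$ with $\sigma_2(k)=a$ satisfies $k\, g_1(a)=1-G_1(a)$.

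\emph{Step 2: change variables.} Set $a=\sigma_1(\theta)=\sigma_2(k)$ with $k=k(\theta,\underline{\theta}_1)$. Since $G_1(\sigma_1(\theta))=F(\theta)$, differentiation gives $g_1(a)\,\sigma_1'(\theta)=f(\theta)$. Substituting into Player~2's condition yields
\[
k\, f(\theta)/\sigma_1'(\theta)=1-F(\theta),
\]
which is \eqref{eq:sigmaode}. In the same way, $g_2(a)\,\sigma_2'(k)=f(k)$, and Player~1's condition gives
\[
\sigma_2'(k)=\theta f(k)/\bigl(1-F(k)\bigr).
\]
Differentiating the identity $\sigma_1(\theta)=\sigma_2(k(\theta))$ gives $\partial_\theta k=\sigma_1'(\theta)/\sigma_2'(k)$. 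Dividing the two expressions produces exactly \eqref{eq:kode}.

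\emph{Step 3: boundary condition.} The condition $\sigma_1=\sigma_2\circ k$ holds by definition of $k$. For $k(\underline{\theta}_1,\underline{\theta}_1)=\underline{\theta}$, let $\theta\downarrow\underline{\theta}_1$. Then $\sigma_1(\theta)\downarrow 0$ by continuity (Lemma~\ref{lem:monotone}). Since only Player~1 concedes at zero, $\sigma_2(\kappa)>0$ for every $\kappa>\underline{\theta}$, and $\sigma_2(\kappa)\to0$ as $\kappa\downarrow\underline{\theta}$. Hence $\sigma_2^{-1}(a)\to\underline{\theta}$ as $a\downarrow0$.

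\emph{Main obstacle.} The delicate part is verifying that the first-order conditions really hold with equality at every interior type. This requires $g_j$ to exist at the relevant points and the optimum to be interior rather than a corner. We handle it as follows.

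First, we make sure the ranges of $\sigma_1$ and $\sigma_2$ on $(0,\infty)$ coincide, so that $k$ maps into $(\underline{\theta},\overline{\theta}_2)$. This follows because a gap in one player's range would make the other player's types in that gap unprofitable, which is the continuity argument of Lemma~\ref{lem:monotone}.

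Second, we derive differentiability of $G_j$ from Lemma~\ref{lem:diff} together with Assumption~\ref{ass:density}. If $\sigma_j'$ vanished somewhere, $g_j$ would blow up, but the FOC identity then forces consistency. We would therefore restrict attention to points where $\sigma_j'>0$ and extend to the remaining points by continuity of both sides of \eqref{eq:kode}.
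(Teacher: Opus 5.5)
Your proposal is correct and follows the paper's route: the first-order condition of each interior type, translated into type space through $k=\sigma_2^{-1}\circ\sigma_1$ and the chain rule. It is actually more complete than the paper's sketch, which writes only Player~1's condition (by itself this yields $\sigma_2'(k)=\theta f(k)/(1-F(k))$), whereas \eqref{eq:sigmaode}, and hence \eqref{eq:kode}, also need Player~2's condition exactly as you use it. The one soft spot is extending \eqref{eq:kode} ``by continuity'' to points where $\sigma_j'=0$, which is not rigorous on its own; it is also unnecessary, because your remark that $g_j$ would blow up, made explicit, shows such points cannot exist: an infinite density of $G_j$ at $a>0$ would make a slight upward deviation strictly profitable for the opponent's matched type (whose value is positive), contradicting optimality.
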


\begin{definition}[Admissible Type-to-Type Function]\label{def:admissible}
A pair $(k, \underline{\theta}_1)$ with
$\underline{\theta}_1 \in [\underline{\theta}, \overline{\theta})$ and
$k: (\underline{\theta}_1, \overline{\theta}_1) \to
(\underline{\theta}, \overline{\theta}_2)$ is \textit{admissible} if:
\begin{enumerate}[label=(\roman*)]
  \item $k$ is strictly increasing and differentiable on
    $(\underline{\theta}_1, \overline{\theta}_1)$;
  \item $k$ satisfies the ODE \eqref{eq:kode} on
    $(\underline{\theta}_1, \overline{\theta}_1)$;
  \item $\lim_{\theta \downarrow \underline{\theta}_1}
    k(\theta, \underline{\theta}_1) = \underline{\theta}$;
  \item $\sigma_1(\theta, \underline{\theta}_1) :=
    \int_{\underline{\theta}_1}^{\theta}
    k(t, \underline{\theta}_1)\, f(t)/\bigl(1 - F(t)\bigr)\, dt < \infty$
    for all $\theta \in (\underline{\theta}_1, \overline{\theta}_1)$.
\end{enumerate}
\end{definition}

Once an admissible $(k, \underline{\theta}_1)$ is fixed,
$\sigma_1$ is uniquely determined by integration,
\[
\sigma_1(\theta, \underline{\theta}_1)
    = \int_{\underline{\theta}_1}^{\theta}
      \frac{k(t, \underline{\theta}_1)\, f(t)}{1 - F(t)}\, dt,
\]
and $\sigma_2$ follows from
$\sigma_2(\theta_2) = \sigma_1(k^{-1}(\theta_2, \underline{\theta}_1),
\underline{\theta}_1)$.

\begin{proposition}[Sufficiency]\label{prop:suff}
Let $(k, \underline{\theta}_1)$ be admissible in the sense of
Definition~\ref{def:admissible}, and let $\sigma_1$, $\sigma_2$ be the strategies
recovered from $k$ and the boundary condition
$\sigma_1(\underline{\theta}_1, \underline{\theta}_1) = 0$.
Then $(\sigma_1, \sigma_2)$ is a Bayesian Nash equilibrium.
\end{proposition}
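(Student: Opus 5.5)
The plan is to verify directly that each type's prescribed stopping time is a global best response to the opponent's induced stopping-time distribution, using the integrated payoff representation from Remark~\ref{rem:pure}. First I would record the structure of the recovered strategies. By (i) and (iv), $\sigma_1$ is finite, continuous and strictly increasing on $(\underline{\theta}_1,\overline{\theta}_1)$, with $\sigma_1=0$ on $(\underline{\theta},\underline{\theta}_1]$ and $\sigma_1=\infty$ above $\overline{\theta}_1$. Since $k$ is strictly increasing with $k(\underline{\theta}_1)=\underline{\theta}$ by (iii), $\sigma_2=\sigma_1\circ k^{-1}$ is continuous and strictly increasing on $(\underline{\theta},\overline{\theta}_2)$ and vanishes only at the lower limit. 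Consequently the opponent's stopping-time distributions are
\[
G_2(a)=F\bigl(\sigma_2^{-1}(a)\bigr), \qquad G_1(a)=F\bigl(\sigma_1^{-1}(a)\bigr)\ \ (a>0),
\]
with $G_1$ having an atom $F(\underline{\theta}_1)$ at $0$. Both are atomless on $(0,\infty)$ and cover the common range $\sigma_1((\underline{\theta}_1,\overline{\theta}_1))=\sigma_2((\underline{\theta},\overline{\theta}_2))=[0,\bar a)$. Any remaining mass sits at $\infty$.

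Next I would compute hazard rates. For $a=\sigma_1(\theta)=\sigma_2(k(\theta))$, differentiating $G_2(a)=F(k(\theta))$ gives
\[
h_2(a)=\frac{f(k)\,k'(\theta)}{\bigl(1-F(k)\bigr)\sigma_1'(\theta)}.
\]
Substituting \eqref{eq:kode} and \eqref{eq:sigmaode}, this simplifies to $h_2(a)=1/\theta$. Symmetrically, \eqref{eq:sigmaode} yields $h_1(a)=f(\theta)/\bigl((1-F(\theta))\sigma_1'(\theta)\bigr)=1/k(\theta)$. So at the stopping time of Player~1's type $\theta$, the opponent's inverse hazard equals $\theta$. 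Since $\sigma_1$ is increasing, the inverse hazard $1/h_2(a)$ is strictly increasing in $a$ on $(0,\bar a)$.

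Then comes global optimality, which I expect to be the main step. The derivative $\pi_1'(a,\theta')=(1-G_2(a))\bigl[\theta' h_2(a)-1\bigr]$ is positive for $a<\sigma_1(\theta')$ and negative for $a\in(\sigma_1(\theta'),\bar a)$, because $1/h_2$ is strictly increasing. Hence $\pi_1(\cdot,\theta')$ is quasi-concave on $[0,\bar a)$ with its peak at $\sigma_1(\theta')$. For $\theta'\le\underline{\theta}_1$ the derivative is nonpositive throughout, since $1/h_2>\underline{\theta}_1\ge\theta'$, so $0$ is optimal. Stopping times $a\ge\bar a$ do not help when $\bar a<\infty$: the opponent puts all remaining mass at $\infty$, so such deviations only add cost without raising the winning probability. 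For types above $\overline{\theta}_1$, choosing $\infty$ needs a separate check. By Lemma~\ref{lem:infinity} only one side has mass at $\infty$, and if $\bar a<\infty$ then $G_2$ reaches $1$ at $\bar a$. Winning against all opponents with finite stopping times at cost $\int_0^{\bar a}(1-G_2)$ is then weakly optimal, and ``$\infty$'' is payoff-equivalent to stopping at $\bar a$ in that case.

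The same argument applies to Player~2 using $h_1(a)=1/k(\theta)$. A Player~2 type $\theta_2$ peaks at $\sigma_2(\theta_2)$, and the zero atom of $G_1$ only raises the payoff from any $a>0$ relative to $a=0$, so $\sigma_2>0$ is consistent. The delicate points I would handle with care are the boundary behavior near $\bar a$ and at $\infty$, and the tie at $a=0$ between the atom of $G_1$ and Player~2's lowest types; Lemma~\ref{lem:zero} structure shows the latter is measure zero.
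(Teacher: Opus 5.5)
\textbf{Verdict.} Your core argument is the paper's: the ODEs give the first-order condition at $\sigma_i(\theta)$, and single crossing makes that critical point a global maximum. Your identities $1/h_2(a)=\sigma_1^{-1}(a)$ and $1/h_1(a)=k(\sigma_1^{-1}(a))$ are the derivative form of the paper's Steps~1--2. They handle the types conceding at zero more cleanly than its Step~4. The interior and zero-boundary parts are fine, but there is a gap at the top of the type space.

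\textbf{The gap.} Your treatment of deviations beyond the active range, and of the types with $\sigma_i=\infty$, is conditioned on $\bar a<\infty$. That case never arises. Whichever side's active range reaches $\overline{\theta}$ has a strategy whose derivative is the matched opponent type times $h$. For Player~2 this is $\sigma_2'(\theta_2)=k^{-1}(\theta_2)\,h(\theta_2)$, which follows from the two ODEs. Hence $\sigma_i(\theta)\ge c\,\ln\bigl[(1-F(t_0))/(1-F(\theta))\bigr]\to\infty$, where $c>0$ is the matched type at some interior $t_0$, so $\bar a=\infty$ in every admissible profile. Now take the case where some types do stop at $\infty$. Then one player always stops in finite time with unbounded stopping times, and the other has positive mass at $\infty$. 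The Pareto example is of this kind: Player~2 types above $\underline{\theta}_1\underline{\theta}/(\underline{\theta}_1-\underline{\theta})$ stop at $\infty$. The optimality of those types is never checked. Quasi-concavity on $[0,\bar a)$ only covers finite deviations by finite-stopping types.

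\textbf{How to close it.} The repair uses your own formula. Suppose Player~2 carries the mass at $\infty$, so every Player~1 type stops in finite time.
\begin{itemize}
\item For $\theta_2\ge\overline{\theta}_2$ and every finite $a>0$, you have $k(\sigma_1^{-1}(a))<\overline{\theta}_2\le\theta_2$. So $\pi_2'(a,\theta_2)=(1-G_1(a))\bigl[\theta_2/k(\sigma_1^{-1}(a))-1\bigr]>0$.
\item Since $\pi_2(\cdot,\theta_2)$ is increasing while $\theta_2G_1\le\theta_2$, the cost $\int_0^a(1-G_1(s))\,ds$ stays bounded. By monotone convergence, $\lim_{a\to\infty}\pi_2(a,\theta_2)=\theta_2-\int_0^\infty(1-G_1(s))\,ds$, which is exactly the payoff from stopping at $\infty$. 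So $\infty$ is optimal for these types.
\item A finite-stopping Player~2 type also gets $\lim_{a\to\infty}\pi_2(a,\theta_2)$ from $\infty$. This is strictly below its value at $\sigma_2(\theta_2)$.
\item A Player~1 type deviating to $\infty$ ties there with positive probability and gets $-\infty$.
\end{itemize}

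\textbf{A logical point.} Do not cite Lemma~\ref{lem:infinity} here. It is a necessary condition on equilibria, so invoking it inside a sufficiency proof is circular. What you need is that admissibility itself, through the range of $k$, gives $\max\{\overline{\theta}_1,\overline{\theta}_2\}=\overline{\theta}$. Without that, both sides would put mass at $\infty$ and the recovered profile would not be an equilibrium. The paper's own Step~3 is equally silent on the $\infty$-types, so this check has to be supplied in either version.
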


The proof (given in Appendix~\ref{app:sufficiency}) verifies that no type of either
player has a profitable deviation from her assigned stopping time.  The key
observation is that $\sigma_1$ and $\sigma_2$ are strictly increasing and continuous,
so the opponent's stopping-time distribution is atomless.  The expected payoff of
type $\theta_i$ is therefore differentiable in $a_i$ and satisfies the
single-crossing property in $(\theta_i, a_i)$ established in
Remark~\ref{rem:pure}: the marginal gain from raising one's stopping time is
increasing in type.  Since the ODE ensures that the first-order condition holds at
$a_i = \sigma_i(\theta_i)$, and single-crossing guarantees that the first-order
condition is satisfied at exactly one point (which is the global maximum), no
deviation is profitable.

Lemma~\ref{lem:ode} and Proposition~\ref{prop:suff} together establish a bijection
between admissible type-to-type functions and equilibrium strategy profiles.  The
equilibrium selection problem therefore reduces to pinning down the admissible pair
$(k, \underline{\theta}_1)$.

\section{Examples of Equilibrium Multiplicity}\label{sec:examples}

We present three canonical examples illustrating qualitatively different forms of
multiplicity.

\subsection{Example 1: Exponential Distribution}

Let $F \sim \mathrm{Exp}(\lambda)$, $\lambda > 0$, with support $(0, \infty)$.

The game admits a family of equilibria indexed by $\gamma > 0$
\citep{riley1980strong}:
\[
\sigma_1(\theta, \underline{\theta}_1) = \gamma \frac{\theta^2}{2}, \qquad
\sigma_2(\theta, \underline{\theta}_1) = \frac{1}{\gamma} \frac{\theta^2}{2}.
\]
The type-to-type ODE \eqref{eq:kode} reduces to
$\partial k(\theta,\underline{\theta}_1)/\partial\theta = k(\theta,\underline{\theta}_1)/\theta$,
with boundary condition $k(\underline{\theta}_1, \underline{\theta}_1) = 0$.  The
general solution is $k(\theta, \underline{\theta}_1) = \gamma\theta$ for $\gamma > 0$.
Since $\underline{\theta} = 0$, the boundary condition forces
$\underline{\theta}_1 = 0$, but $\gamma$ remains free.  Both players' lowest types
concede immediately ($\sigma_i(0, 0) = 0$), yet the relative aggressiveness $\gamma$
is undetermined.  As $\gamma \to 0^+$, Player~1 becomes arbitrarily passive while
Player~2 becomes arbitrarily aggressive; in the limit, one side threatens to fight
forever, deterring the other entirely.

\begin{figure}[H]
\centering
\begin{tikzpicture}
\begin{axis}[
  width=0.6\textwidth, height=0.45\textwidth,
  xmin=0, xmax=4, ymin=0, ymax=3,
  axis x line=bottom, axis y line*=left,
  xtick={0}, ytick={0},
  xticklabels={$\underline{\theta}=0$}, yticklabels={$0$},
  xlabel={Type $\theta$}, ylabel={Stopping point $a_i$},
  legend cell align={left}, legend pos=north west,
  legend style={font=\small}
]
\addplot[dashed,domain=0:4,smooth,samples=100]({x},{(1/6)*x^2});
\addlegendentry{$\sigma_1=\gamma\,\theta^2/2$}
\addplot[thick,domain=0:4,smooth,samples=100]({x},{(3/2)*x^2});
\addlegendentry{$\sigma_2=\gamma^{-1}\theta^2/2$}
\end{axis}
\end{tikzpicture}
\caption{Equilibrium strategies for $F\sim \mathrm{Exp}(\lambda)$ with $\gamma=1/3$.}
\label{fig:exp}
\end{figure}
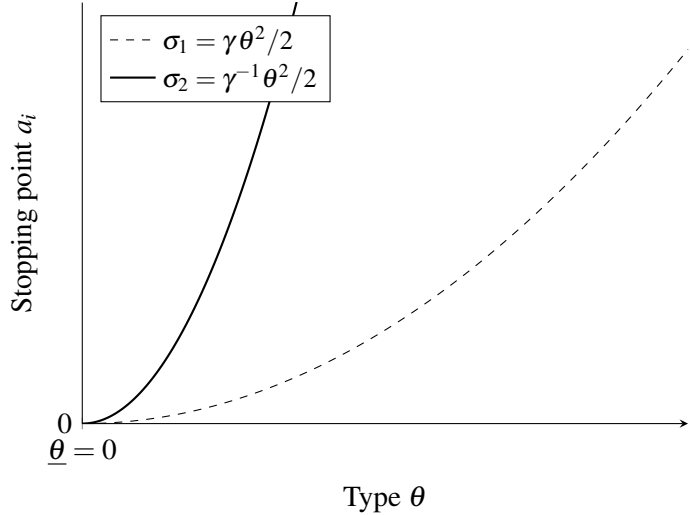

\subsection{Example 2: Uniform Distribution}

Let $F \sim U(0,1)$, with support $(0, 1)$.

A family of equilibria indexed by $\gamma > 0$ \citep{nalebuff1985asymmetric}:
\begin{align*}
\sigma_1(\theta, \underline{\theta}_1) &=
    -\frac{\ln(1 + \theta(\gamma - 1))}{\gamma - 1} - \ln(1 - \theta), \\[3pt]
\sigma_2(\theta, \underline{\theta}_1) &=
    -\frac{\ln(1 + \theta(\gamma^{-1} - 1))}{\gamma^{-1} - 1} - \ln(1 - \theta).
\end{align*}
The symmetric equilibrium is recovered at $\gamma = 1$.  As in the exponential
case, $\sigma_i(0, 0) = 0$ for both players and the multiplicity is indexed by the
continuous parameter $\gamma$.

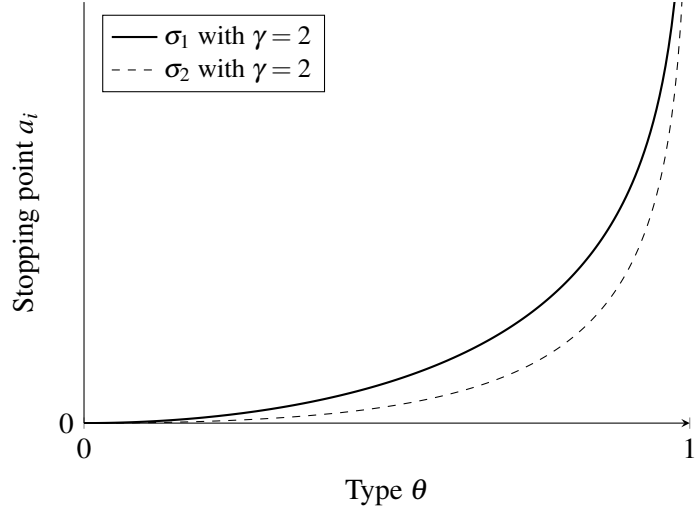
\begin{figure}[H]
\centering
\begin{tikzpicture}
\begin{axis}[
  width=0.6\textwidth, height=0.45\textwidth,
  xmin=0, xmax=1, ymin=0, ymax=3,
  domain=0:0.995, samples=200,
  axis x line=bottom, axis y line*=left,
  xtick={0,1}, ytick={0},
  xlabel={Type $\theta$}, ylabel={Stopping point $a_i$},
  legend cell align={left}, legend pos=north west,
  legend style={font=\small}
]
\addplot[thick]({x},{max(0,-ln(1+x*(2-1))/(2-1)-ln(1-x))});
\addlegendentry{$\sigma_1$ with $\gamma=2$}
\addplot[dashed]({x},{max(0,-ln(1+x*(0.5-1))/(0.5-1)-ln(1-x))});
\addlegendentry{$\sigma_2$ with $\gamma=2$}
\end{axis}
\end{tikzpicture}
\caption{Equilibrium strategies for $F\sim U(0,1)$ with $\gamma=2$.}
\label{fig:uniform}
\end{figure}

\subsection{Example 3: Pareto Distribution}

Let $F \sim \mathrm{Pareto}(\underline{\theta}, \alpha)$, with $\underline{\theta} > 0$
and support $(\underline{\theta}, \infty)$.

A family of equilibria indexed by $\underline{\theta}_1 \in [\underline{\theta}, \infty)$:
\begin{align*}
\sigma_1(\theta, \underline{\theta}_1)
    &= \alpha \frac{\underline{\theta}_1 \underline{\theta}}{\underline{\theta}_1 - \underline{\theta}}
       \ln\!\left(\frac{1 + \theta \tfrac{\underline{\theta}_1 - \underline{\theta}}{\underline{\theta}_1 \underline{\theta}}}
                       {1 + \underline{\theta}_1 \tfrac{\underline{\theta}_1 - \underline{\theta}}{\underline{\theta}_1 \underline{\theta}}}\right),
       \quad \theta \geq \underline{\theta}_1, \\[6pt]
\sigma_2(\theta, \underline{\theta}_1)
    &= \begin{cases}
       \alpha \dfrac{\underline{\theta}_1 \underline{\theta}}{\underline{\theta}_1 - \underline{\theta}}
       \ln\!\left(\dfrac{1}{(1 - \theta \tfrac{\underline{\theta}_1-\underline{\theta}}{\underline{\theta}_1\underline{\theta}})(1 + \underline{\theta}_1 \tfrac{\underline{\theta}_1-\underline{\theta}}{\underline{\theta}_1\underline{\theta}})}\right)
           & \text{if } \theta \in \!\left(\underline{\theta},\,
             \dfrac{\underline{\theta}_1\underline{\theta}}{\underline{\theta}_1-\underline{\theta}}\right) \\[6pt]
       \infty & \text{if } \theta \geq \dfrac{\underline{\theta}_1\underline{\theta}}{\underline{\theta}_1-\underline{\theta}}.
       \end{cases}
\end{align*}
The free parameter is now $\underline{\theta}_1$: the highest type of Player~1 that
concedes at zero.  This is qualitatively different from the exponential and uniform
cases.

\begin{figure}[H]
\centering
\begin{tikzpicture}
\begin{axis}[
  width=0.6\textwidth, height=0.45\textwidth,
  xmin=1, xmax=3, ymin=0, ymax=3,
  axis x line=bottom, axis y line*=left,
  xtick={1,2}, ytick={0},
  xticklabels={$\underline{\theta}$,$\underline{\theta}_1$}, yticklabels={$0$},
  xlabel={Type $\theta$}, ylabel={Stopping point $a_i$},
  legend cell align={left}, legend pos=north west,
  legend style={font=\tiny}
]
\addplot[thick,domain=1:3,smooth,samples=100]({x},{max(0,2*ln((1+x/2)/(1+2/2)))});
\addlegendentry{$\sigma_1(\theta,\underline{\theta}_1)$, $\theta\geq\underline{\theta}_1$}
\addplot[dashed,domain=1:2,smooth,samples=100]
  ({x},{max(0,2*ln(1/((1-x/2)*(1+2/2))))});
\addlegendentry{$\sigma_2(\theta,\underline{\theta}_1)$,
  $\theta<\frac{\underline{\theta}_1\underline{\theta}}{\underline{\theta}_1-\underline{\theta}}$}
\end{axis}
\end{tikzpicture}
\caption{Equilibrium strategies for $F\sim\mathrm{Pareto}(\underline{\theta},\alpha)$
  with $\underline{\theta}=1$, $\alpha=1$, $\underline{\theta}_1=2$.}
\label{fig:pareto}
\end{figure}
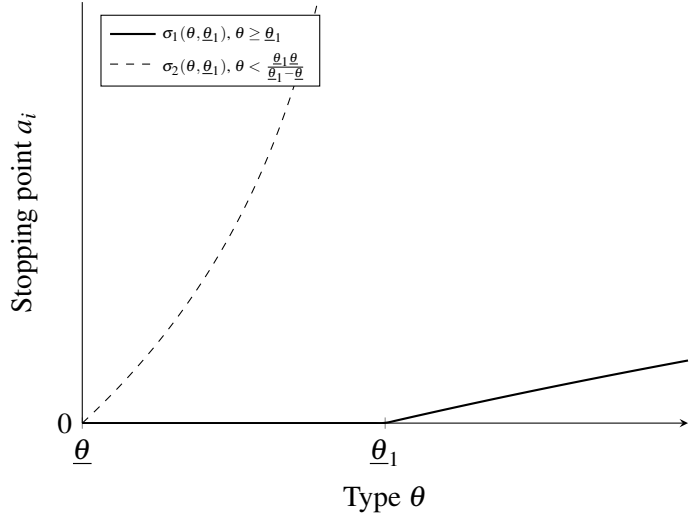

\subsection{Two Flavors of Multiplicity}

\begin{table}[H]
\centering
\renewcommand{\arraystretch}{1.3}
\begin{tabular}{l c}
\toprule
\textbf{Distribution} & \textbf{Free Parameter} \\
\midrule
Exponential & $\gamma > 0$ (strategy ratio) \\
Uniform     & $\gamma > 0$ (strategy ratio) \\
Pareto      & $\underline{\theta}_1 \geq \underline{\theta}$ (mass at zero) \\
\bottomrule
\end{tabular}
\caption{Two forms of equilibrium multiplicity.}
\label{tab:flavors}
\end{table}

The exponential and uniform cases share the same structural source: both players'
lowest types concede at zero, but relative aggressiveness is undetermined.  The
Pareto case differs: the free parameter governs the mass of Player~1 types conceding
immediately.  We next characterize the precise condition on the type distribution
that determines which form of multiplicity arises, and identify when existing
refinements can resolve it.

\section{Characterization of Multiplicity}\label{sec:char}

\subsection{The Integral Identity}

\begin{lemma}[Integral Identity]\label{lem:integral}
For all $\theta \in (\underline{\theta}_1, \overline{\theta}_1)$, the equilibrium
type-to-type function satisfies
\[
\int_{k(\theta,\underline{\theta}_1)}^{\theta}
  \frac{f(x)}{x(1 - F(x))}\, dx = C
\]
for some constant $C$.
\end{lemma}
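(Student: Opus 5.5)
The plan is to read the ODE \eqref{eq:kode} from Lemma~\ref{lem:ode} as a separable equation and integrate it. Write
\[
\phi(x) := \frac{f(x)}{x\bigl(1-F(x)\bigr)}, \qquad x \in (\underline{\theta},\overline{\theta}),
\]
which is continuous and strictly positive on $(\underline{\theta},\overline{\theta})$ under Assumption~\ref{ass:density} (note $x>0$ there, since $\underline{\theta}\ge 0$ and the support is open). Then \eqref{eq:kode} reads
\[
\frac{\partial k(\theta,\underline{\theta}_1)}{\partial\theta}
= \frac{\phi(\theta)}{\phi\bigl(k(\theta,\underline{\theta}_1)\bigr)},
\qquad\text{i.e.}\qquad
\phi\bigl(k(\theta,\underline{\theta}_1)\bigr)\,\frac{\partial k(\theta,\underline{\theta}_1)}{\partial\theta} = \phi(\theta).
\]

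Next, fix a reference point $x_0\in(\underline{\theta},\overline{\theta})$ and set $\Phi(y):=\int_{x_0}^{y}\phi(x)\,dx$. Because $\phi$ is continuous, $\Phi$ is $C^1$ on the open support, and each value is finite. By Definition~\ref{def:admissible}(i) and Lemma~\ref{lem:diff}, $k(\cdot,\underline{\theta}_1)$ is differentiable on $(\underline{\theta}_1,\overline{\theta}_1)$ with values in $(\underline{\theta},\overline{\theta}_2)\subseteq(\underline{\theta},\overline{\theta})$. So the chain rule applies to
\[
H(\theta):=\Phi(\theta)-\Phi\bigl(k(\theta,\underline{\theta}_1)\bigr)=\int_{k(\theta,\underline{\theta}_1)}^{\theta}\phi(x)\,dx,
\]
and gives $H'(\theta)=\phi(\theta)-\phi(k)\,\partial_\theta k=0$ on $(\underline{\theta}_1,\overline{\theta}_1)$. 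Since that domain is an interval, the mean value theorem makes $H$ constant there, which is the stated identity with $C:=H(\theta)$ for any $\theta$ in the interval.

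The main thing to check is that this integral is well defined and finite for each $\theta$. Finiteness at any fixed $\theta$ is automatic, because both endpoints $\theta$ and $k(\theta,\underline{\theta}_1)$ lie strictly inside the support, where $\phi$ is continuous. I will stress that the argument never integrates down to $\underline{\theta}$. Whether $\int_{\underline{\theta}}\phi$ converges is exactly what the hazard potential will later encode, and it is not needed here. The constant $C$ may be positive, negative or zero depending on the equilibrium, and the proof makes no claim about its sign. I will also note that $k$ stays in the open support for $\theta$ in the open interval, by strict monotonicity and the boundary condition (iii).
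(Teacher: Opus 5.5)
Your proposal is correct and follows essentially the same route as the paper: rewrite \eqref{eq:kode} as $g(k)\,\partial_\theta k = g(\theta)$ with $g(x)=f(x)/[x(1-F(x))]$, then differentiate $\int_{k(\theta,\underline{\theta}_1)}^{\theta} g(x)\,dx$ in $\theta$ to get zero. Your additions spell out what the paper's proof leaves implicit: anchoring the antiderivative at an interior reference point so every integral is finite, and invoking the mean value theorem on the interval.
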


The proof is given in Appendix~\ref{app:proofs}.

\subsection{The Hazard Potential}\label{sec:hp}

Define the \textit{hazard potential} of the distribution $F$ as
\[
\Lambda(\theta) := \int_{\theta^\circ}^{\theta} \frac{h(x)}{x}\, dx
               = \int_{\theta^\circ}^{\theta} \frac{f(x)}{x(1 - F(x))}\, dx,
\]
for a fixed interior reference point $\theta^\circ \in (\underline{\theta},
\overline{\theta})$, where $h(x) := f(x)/(1-F(x))$ is the hazard rate of $F$.
The choice of $\theta^\circ$ affects $\Lambda$ only through an additive constant
and does not affect the analysis, since only differences of the form
$\Lambda(\theta) - \Lambda(k)$ appear.
The hazard potential accumulates the hazard rate of $F$, normalized by type value,
from the baseline $\theta^\circ$ up to $\theta$.  Intuitively, $h(x)/x$ measures
the intensity with which the opponent's type distribution \say{thins out} near $x$,
per unit of prize value at that type: a distribution whose hazard rate decays slowly
relative to type (large $h(x)/x$) makes it difficult for nearby types to be
separated through costly delay, generating steep screening incentives near $x$.  The
hazard potential thus summarizes the cumulative screening difficulty of $F$ below
$\theta$, and---as established below---its behavior at the boundaries of the type
support determines the form and resolvability of equilibrium multiplicity.

The function $\Lambda$ is strictly increasing on $(\underline{\theta},
\overline{\theta})$.  Its boundary limits are
\[
\underline{\Lambda} := \lim_{\theta \to \underline{\theta}^+} \Lambda(\theta)
    \in [-\infty, 0),
\qquad
\overline{\Lambda} := \lim_{\theta \to \overline{\theta}^-} \Lambda(\theta)
    \in (0, +\infty].
\]
From Lemma~\ref{lem:integral}, the general solution for the type-to-type mapping is
\[
k(\theta, \underline{\theta}_1) = \Lambda^{-1}(\Lambda(\theta) - C),
\]
so selecting a unique equilibrium requires pinning down both $C$ and
$\underline{\theta}_1$.

\subsection{Boundary Conditions}

The lower boundary condition
$k(\underline{\theta}_1, \underline{\theta}_1) = \underline{\theta}$
gives $C = \Lambda(\underline{\theta}_1) - \underline{\Lambda}$.
Two cases arise according to the behavior of $\underline{\Lambda}$.

In Case~A ($\underline{\Lambda} = -\infty$), the boundary forces
$\underline{\theta}_1 = \underline{\theta}$ (no mass at zero for either player), but
$C = \Lambda(\underline{\theta}_1) - \underline{\Lambda}$ takes the indeterminate
form $(-\infty) - (-\infty)$: both
$\Lambda(\underline{\theta}_1) \to -\infty$ and $\underline{\Lambda} = -\infty$, so
their difference fails to determine a unique value of $C$.

To verify that this indeterminate form genuinely produces a one-parameter family
of equilibria, fix any $C \in \mathbb{R}$ and define
$k_C(\theta) := \Lambda^{-1}(\Lambda(\theta) - C)$ on
$(\underline{\theta}, \overline{\theta})$.  Since $\Lambda$ is a strictly increasing
bijection from $(\underline{\theta}, \overline{\theta})$ onto
$(\underline{\Lambda}, \overline{\Lambda}) = (-\infty, \overline{\Lambda})$,
and $\Lambda(\theta) - C$ ranges over $(-\infty, \overline{\Lambda} - C)$ as
$\theta$ ranges over $(\underline{\theta}, \overline{\theta})$, the function
$k_C$ is well-defined, strictly increasing, and differentiable.  As
$\theta \to \underline{\theta}^+$,
$\Lambda(\theta) \to -\infty$, so
$\Lambda(\theta) - C \to -\infty = \underline{\Lambda}$, giving
$k_C(\theta) \to \underline{\theta}$: the lower boundary condition is satisfied
with $\underline{\theta}_1 = \underline{\theta}$.
The recovered strategies $\sigma_1$, $\sigma_2$ are finite on the interior because
$k_C(\theta) \leq \theta$ (for $C > 0$; the case $C \leq 0$ is symmetric with
roles reversed) and the integrand
$k_C(t) f(t)/(1 - F(t))$ is locally integrable under Assumption~\ref{ass:density}.
By Proposition~\ref{prop:suff}, each $C \in \mathbb{R}$ yields an equilibrium.

In Case~B ($\underline{\Lambda}$ finite), the boundary gives
$C = \Lambda(\underline{\theta}_1) - \underline{\Lambda}$, which is well-defined and
finite for every $\underline{\theta}_1 \geq \underline{\theta}$.  The parameter
$\underline{\theta}_1$ is then free.  For each
$\underline{\theta}_1 \in [\underline{\theta}, \overline{\theta})$, the
corresponding $C$ is uniquely determined and the recovered $(k, \sigma_1, \sigma_2)$
constitutes an equilibrium by Proposition~\ref{prop:suff}.

One might hope the upper boundary $\theta \to \overline{\theta}$ pins down $C$, but
this fails in both cases.  When $\overline{\Lambda} = +\infty$, both
$\Lambda(\theta)$ and $\Lambda(k(\theta, \underline{\theta}_1))$ diverge to $+\infty$
as $\theta \to \overline{\theta}^-$.  Their difference equals $C$ by the integral
identity, but since both sides of the identity grow without bound, the constraint
$\Lambda(\theta) - \Lambda(k(\theta,\underline{\theta}_1)) = C$ is satisfied for
every $C \in \left(0,\overline{\Lambda}-\underline{\Lambda} \right)$ in the limit: the upper boundary generates no independent
equation.  When $\overline{\Lambda}$ is finite (which forces
$\overline{\theta} = \infty$), as $\theta \to \infty$ we have
$\Lambda(\theta) \to \overline{\Lambda}$ and
$k(\theta, \underline{\theta}_1) \to \Lambda^{-1}(\overline{\Lambda} - C)$, so
$\Lambda(k(\theta,\underline{\theta}_1)) \to \overline{\Lambda} - C$.  The integral
identity then reads $\overline{\Lambda} - (\overline{\Lambda} - C) = C$, which holds
for every $C \in \mathbb{R}$ and is therefore a tautology.  In both cases, since
$k(\cdot, \underline{\theta}_1)$ is increasing, as $\theta$ grows
$k(\theta, \underline{\theta}_1)$ adjusts upward to absorb whatever $C$ was chosen;
the upper limit never produces an independent equation.

\subsection{Classification}

\begin{table}[H]
\centering
\renewcommand{\arraystretch}{1.4}
\begin{tabular}{c c c c}
\toprule
\textbf{Case} & $\boldsymbol{\underline{\Lambda}}$ & \textbf{Free Parameter} & \textbf{Range} \\
\midrule
A & $-\infty$ & $C$               & $\mathbb{R}$ \\
B & finite    & $\underline{\theta}_1$ & $[\underline{\theta}, \overline{\theta})$ \\
\bottomrule
\end{tabular}
\caption{Classification of equilibrium multiplicity.}
\label{tab:classification}
\end{table}

In both cases a continuum of equilibria exists.  The symmetric solution
($k(\cdot, \underline{\theta}_1) = \mathrm{id}$, i.e., $C = 0$,
$\underline{\theta}_1 = \underline{\theta}$) always belongs to the equilibrium set
but is never its only element.

\subsection{Matching Examples to Cases}

\begin{table}[H]
\centering
\renewcommand{\arraystretch}{1.4}
\begin{tabular}{l c c c c}
\toprule
\textbf{Distribution} & $\underline{\theta}$ & $\overline{\theta}$ &
  $\underline{\Lambda}$ & \textbf{Case} \\
\midrule
Exponential             & $0$                    & $\infty$ & $-\infty$ & A \\
Uniform                 & $0$                    & $1$      & $-\infty$ & A \\
Pareto$(\underline{\theta}, \alpha)$ & $\underline{\theta} > 0$ & $\infty$ & finite & B \\
\bottomrule
\end{tabular}
\caption{Examples matched to cases.}
\label{tab:examples}
\end{table}

The exponential and uniform distributions are both Case~A, but they differ in
whether $\overline{\theta}$ is finite---a distinction that will prove decisive for
refinements.

\subsection{Why Multiplicity Persists}

For $F \sim \mathrm{Exp}(\lambda)$: $h(x) = \lambda$, so $\Lambda(\theta) = \lambda
\ln \theta$.  The integral identity becomes $\lambda \ln \theta = C + \lambda \ln
k(\theta, \underline{\theta}_1)$.  The boundary condition
$k(0^+,0) = 0$ gives
$\lim_{\theta \to 0^+} \lambda \ln \theta = C + \lim_{\theta \to 0^+} \lambda \ln
k(\theta, 0)$.  Both sides diverge to $-\infty$, so any finite $C$ is admissible.
Solving: $k(\theta, 0) = \theta e^{-C/\lambda}$, recovering the family
$k(\theta, 0) = \gamma\theta$ for $\gamma = e^{-C/\lambda}$.

For $F \sim \mathrm{Pareto}(\underline{\theta}, \alpha)$: $h(x) = \alpha/x$, so
$\Lambda(\theta) = -\alpha/\theta$ (up to a constant).  The lower limit is
$\underline{\Lambda} = -\alpha/\underline{\theta}$, which is finite.  The boundary
$k(\underline{\theta}_1, \underline{\theta}_1) = \underline{\theta}$ gives
$C = \alpha(1/\underline{\theta} - 1/\underline{\theta}_1)$, which is well-defined
for every $\underline{\theta}_1 \geq \underline{\theta}$.  No additional boundary
condition exists to pin it down.

\section{Equilibrium Refinements}\label{sec:refinements}

\subsection{The Amann--Leininger Perturbation}

\citet{amann1996asymmetric} perturb the payoff structure with $\delta \in [0, 1)$:
\[
u_i^{(\delta)} = \begin{cases}
\theta_i - (\delta a_{-i} + (1 - \delta) a_i) & \text{if } a_i > a_{-i} \\[3pt]
\dfrac{\theta_i}{2} - a_i                       & \text{if } a_i = a_{-i} \\[3pt]
-a_i                                             & \text{if } a_i < a_{-i}.
\end{cases}
\]
The winner now partially internalizes her own stopping cost, deterring excessively
high stopping points.

We endow the space of type-to-type functions with the topology of uniform
convergence on compact subsets of $(\underline{\theta}, \overline{\theta})$ \citep{aliprantis2006infinite}.  Let
$\mathcal{E}(\delta)$ denote the set of equilibrium type-to-type functions of the
game with perturbation parameter~$\delta$, and define the
\textit{Amann--Leininger selection} as
\[
\mathcal{AL} := \limsup_{\delta \to 1^-} \mathcal{E}(\delta),
\]
where the closure is taken in the topology of uniform convergence on compacts.

This criterion is motivated by a structural observation about the equilibrium
correspondence.  When the perturbed game
admits a unique equilibrium for each $\delta \in [0,1)$---as is the case, for
instance, under the conditions identified by \citet{amann1996asymmetric}---$\mathcal{E}(\delta)$ reduces to a singleton $\{k_\delta\}$, while
$\mathcal{E}(1)$ remains an infinite set.  The correspondence
$\delta \mapsto \mathcal{E}(\delta)$ need not be upper hemicontinuous at
$\delta = 1$: a cluster point of $\{k_\delta\}_{\delta \to 1^-}$ may fail to belong
to $\mathcal{E}(1)$, or the limit may single out only a strict subset of that set.
The Amann--Leininger selection
is designed to exploit this
potential discontinuity, narrowing the infinite set $\mathcal{E}(1)$ to those
equilibria that survive as limits of perturbed play.  Whether uniqueness of the
perturbed equilibrium obtains, and if so, whether the limiting selection pins down a
unique element of $\mathcal{E}(1)$ for a given class of type
distributions~$F$, are the questions we turn to below.

\subsection{Behavioral/Crazy Types}

Introduce a common mass $\epsilon \in (0, 1)$ of \say{crazy} types for each player.
Crazy types always choose $a_i = \infty$.  Formally, the type space becomes
$\tilde{\Theta} = \Theta \times \{n, c\}$, where $n$ denotes a normal type and $c$
a crazy type, with prior probability $1 - \epsilon$ on $n$ and $\epsilon$ on $c$,
independently across players.

Crazy types face no costs:
\[
u_i^{\mathrm{crazy}}(a_i, a_{-i}, \theta_i)
    = \begin{cases}
      \theta_i   & \text{if } a_i > a_{-i} \\
      \theta_i/2 & \text{if } a_i = a_{-i} \\
      0          & \text{if } a_i < a_{-i},
      \end{cases}
\]
so their dominant strategy is $a_i = \infty$ regardless of $\theta_i$.

Normal types retain the original payoff structure.  A normal Player~$i$ with type
$\theta_i$ choosing stopping time $a_i$ faces an opponent who is normal with
probability $1 - \epsilon$ and crazy with probability $\epsilon$.  Against a normal
opponent using strategy $\sigma_{-i}$, payoffs are as in the baseline game; against
a crazy opponent, Player~$i$ loses (since the crazy type fights forever) and pays
$a_i$.  The expected payoff of a normal type $\theta_i$ choosing $a_i$ is therefore
\[
\pi_i^{(\epsilon)}(a_i, \theta_i)
  = (1 - \epsilon)\bigl[\theta_i\, G_{-i}(a_i)
    - \textstyle\int_0^{a_i}(1 - G_{-i}(s))\, ds\bigr]
    - \epsilon\, a_i,
\]
where $G_{-i}$ is the CDF of the normal opponent's stopping time under $\sigma_{-i}$.

Let $\mathcal{E}^{BT}(\epsilon)$ denote the set of equilibrium type-to-type
functions of the enlarged game with behavioral-type mass $\epsilon$, restricted to
the normal-type strategies.  Define the \textit{behavioral-types selection} as
\[
\mathcal{BT} := \limsup_{\epsilon \to 0^+} \mathcal{E}^{BT}(\epsilon),
\]
with the same topology of uniform convergence on compact subsets.

\subsection{How Refinements Modify the Hazard Potential}

Both refinements modify the ODE for $k(\cdot, \underline{\theta}_1)$.  The perturbed
integral identity becomes
\[
\int_{k_\delta(\theta,\underline{\theta}_1)}^{\theta}
  \frac{f(x)}{x(1 - \delta F(x))}\, dx = C_\delta.
\]
Define the \textit{perturbed hazard potential}
\[
\Lambda_\delta(\theta) := \int_{\theta^\circ}^{\theta}
  \frac{f(x)}{x(1 - \delta F(x))}\, dx.
\]
The general solution retains the same form:
$k_\delta(\theta, \underline{\theta}_1) = \Lambda_\delta^{-1}(\Lambda_\delta(\theta) - C_\delta)$.

\subsection{Connection Between the Two Refinements}

\begin{proposition}[Refinement Equivalence]\label{prop:equiv}
The Amann--Leininger perturbation with parameter $\delta$ and the behavioral-types
refinement with mass $\epsilon = 1 - \delta$ generate the same equilibrium
correspondence.  Formally:
\begin{enumerate}[label=(\roman*)]
  \item The perturbed ODE for $k_\epsilon(\cdot, \underline{\theta}_1)$ under the
    behavioral-types refinement is
    \[
    \frac{\partial k_\epsilon(\theta, \underline{\theta}_1)}{\partial \theta}
        = \frac{1 - (1-\epsilon) F(k_\epsilon(\theta,\underline{\theta}_1))}
               {1 - (1-\epsilon) F(\theta)}
          \cdot \frac{k_\epsilon(\theta,\underline{\theta}_1)\, f(\theta)}
                     {\theta\, f(k_\epsilon(\theta,\underline{\theta}_1))}.
    \]
    Setting $\delta = 1 - \epsilon$ renders this identical to the Amann--Leininger
    ODE.  The associated perturbed hazard potential is $\Lambda_\delta$ with
    $\delta = 1 - \epsilon$.

  \item There is a bijection between $\mathcal{E}(\delta)$ and
    $\mathcal{E}^{BT}(1 - \delta)$ that preserves the type-to-type function:
    $k_\delta = k_{1-\delta}^{BT}$.

  \item Consequently, $\mathcal{AL} = \mathcal{BT}$ as subsets of the equilibrium
    set of the original game.
\end{enumerate}
\end{proposition}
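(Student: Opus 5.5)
The plan is to derive, in each perturbed game, the first-order conditions of both players at an interior stopping point. I then translate them into an ODE for the type-to-type mapping $k$ and observe that the two ODEs coincide once $\delta = 1-\epsilon$. Item (ii) then follows from checking that the boundary conditions and the sufficiency argument also coincide, and item (iii) from equality of the sets $\mathcal{E}(\delta)$ and $\mathcal{E}^{BT}(1-\delta)$ for every $\delta$.

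\textbf{First-order conditions.} Let $G_{-i}$ be the opponent's (normal-type) stopping-time distribution, atomless on $(0,\infty)$ by the Step~1 argument of Remark~\ref{rem:pure}. That argument goes through verbatim in both perturbed games, since beating an atom still yields a first-order gain proportional to $p$.

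Under Amann--Leininger, the expected payoff of $a_i$ is
\[
\int_0^{a_i}(\theta_i-\delta s)\,dG_{-i}(s)-(1-\delta)a_iG_{-i}(a_i)-a_i\bigl(1-G_{-i}(a_i)\bigr).
\]
Differentiating, the $a_i g_{-i}$ terms cancel and the FOC becomes $\theta_i g_{-i}(a_i)=1-\delta G_{-i}(a_i)$.

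Under behavioral types, differentiating $\pi_i^{(\epsilon)}$ gives $(1-\epsilon)\theta_i g_{-i}(a_i)=1-(1-\epsilon)G_{-i}(a_i)$. With $\delta=1-\epsilon$, this is the Amann--Leininger FOC with the right-hand side divided by $\delta$.

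\textbf{Translation to $k$.} Write $G_2(\sigma_1(\theta))=F(k(\theta))$ and $G_1(\sigma_1(\theta))=F(\theta)$, as in the proof of Lemma~\ref{lem:ode}. The two FOCs (Player~1 at type $\theta$, Player~2 at type $k(\theta)$) then become
\[
\theta f(k)k'=c\,(1-\delta F(k))\,\sigma_1', \qquad k f(\theta)=c\,(1-\delta F(\theta))\,\sigma_1',
\]
with $c=1$ for Amann--Leininger and $c=1/\delta$ for behavioral types. Dividing eliminates both $\sigma_1'$ and $c$. This yields exactly the ODE displayed in (i), and its integral form is the identity with $\Lambda_\delta$.

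The common $k$ determines $\sigma_1$ in each model by $\sigma_1'=c\,k f/(1-\delta F)$. The only difference is therefore a rescaling $\sigma^{BT}=\sigma^{AL}/\delta$ of both players' strategies, which leaves $k=\sigma_2^{-1}\circ\sigma_1$ unchanged. The map sending an Amann--Leininger equilibrium to the rescaled profile is the bijection in (ii).

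\textbf{Main obstacle: equilibrium status of the rescaled profile.} To see that the rescaled profile is indeed an equilibrium of the other game, I will redo Lemmas~\ref{lem:monotone}--\ref{lem:diff} and Proposition~\ref{prop:suff} in each perturbed game. The key input is single-crossing, which survives in both: the cross-difference is $(\theta'-\theta)[G(a')-G(a)]$, multiplied by $1$ or by $1-\epsilon>0$. So the FOC again identifies the unique global maximizer.

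The boundary conditions are also identical in the two games:
\begin{itemize}
\item at most one side concedes at zero (the Lemma~\ref{lem:zero} argument is unchanged);
\item $k(\underline{\theta}_1,\underline{\theta}_1)=\underline{\theta}$;
\item finiteness of $\sigma_1$, which is unaffected by multiplying by $1/\delta$.
\end{itemize}
This is where I expect the most care is needed. One must check that none of the preliminary lemmas used the specific cost structure in a way that breaks under the perturbation. Examples are the "fight forever" lemma, where crazy types fighting forever change the tie analysis at $\infty$, and differentiability via Picard--Lindel\"of with the new locally Lipschitz right-hand side.

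\textbf{Item (iii).} Since $\mathcal{E}(\delta)=\mathcal{E}^{BT}(1-\delta)$ as sets of type-to-type functions for every $\delta\in(0,1)$, and $\epsilon=1-\delta\to0^+$ exactly when $\delta\to1^-$, the two $\limsup$ sets in the topology of uniform convergence on compacts coincide.
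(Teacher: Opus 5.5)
Your derivation of (i) is correct and follows the paper's route. It is in fact more complete than the paper's: the paper writes only Player~1's first-order condition and says it ``translates'' into the $k$-ODE, whereas you pair it with Player~2's condition at type $k(\theta)$ to eliminate $\sigma_1'$. Your (iii) is the paper's argument.

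The problem is the explicit bijection you propose for (ii). Your own equation $k f(\theta) = c\,(1-\delta F(\theta))\,\sigma_1'$ gives $\sigma_1' = k f/\bigl(c\,(1-\delta F)\bigr)$, not $c\,kf/(1-\delta F)$. With $c = 1/\delta$ this means $\sigma^{\mathrm{BT}} = \delta\,\sigma^{\mathrm{AL}}$, not $\sigma^{\mathrm{AL}}/\delta$. With your factor the map does not land in the equilibrium set. Suppose the opponent plays $\sigma^{\mathrm{AL}}_{-i}/\delta$. Then $G^{\mathrm{BT}}(a) = G^{\mathrm{AL}}(\delta a)$ and $g^{\mathrm{BT}}(a) = \delta\, g^{\mathrm{AL}}(\delta a)$. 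The behavioral-types condition $\delta\theta g^{\mathrm{BT}}(a) = 1-\delta G^{\mathrm{BT}}(a)$ at $a = \sigma^{\mathrm{AL}}_i(\theta)/\delta$ then reads $\delta^2\theta g^{\mathrm{AL}}(b) = 1-\delta G^{\mathrm{AL}}(b)$ with $b = \sigma^{\mathrm{AL}}_i(\theta)$. This contradicts the Amann--Leininger condition $\theta g^{\mathrm{AL}}(b) = 1-\delta G^{\mathrm{AL}}(b)$ whenever $\delta<1$. So the verification you single out as the main obstacle would fail for the profile you name.

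Since $k = \sigma_2^{-1}\circ\sigma_1$ is invariant under any common positive rescaling, the repair is just to use the factor $\delta$, and then the obstacle disappears. Let the opponent's normal types play $\delta\sigma_{-i}$ in the behavioral-types game. Substituting $a = \delta b$ in $\pi_i^{(\epsilon)}$ with $\epsilon = 1-\delta$ gives
\[
\pi_i^{(\epsilon)}(\delta b,\theta) = \delta\Bigl[\theta\, G_{-i}(b) - \int_0^b \bigl(1-\delta G_{-i}(s)\bigr)\,ds\Bigr],
\]
where $G_{-i}$ is the distribution of $\sigma_{-i}$. The bracket is exactly your Amann--Leininger payoff of $b$ against $\sigma_{-i}$ after integrating by parts, and a direct check shows the identity also holds with atoms and the tie rule. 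The two best-response problems therefore coincide up to the change of units $a = \delta b$ and a positive factor. Hence $(\sigma_1,\sigma_2)$ is an Amann--Leininger equilibrium if and only if $(\delta\sigma_1,\delta\sigma_2)$ is a behavioral-types equilibrium for the normal types.

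This yields (ii), including the mass conceding at zero and the behavior at $\infty$. It does so without re-proving Lemmas~\ref{lem:monotone}--\ref{lem:diff} and Proposition~\ref{prop:suff} in each perturbed game. It is also sharper than the paper's proof of (ii), which only asserts that the ODE, boundary conditions, admissibility and sufficiency arguments ``apply identically''. The paper never observes that the two equilibrium profiles differ by the factor $\delta$.
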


The proof is given in Appendix~\ref{app:prop1}.

\subsection{A Key Boundary Lemma}

\begin{lemma}[\citealt{amann1996asymmetric}]\label{lem:upper}
If $\overline{\theta} < \infty$, then in any equilibrium of the perturbed game,
$k_\delta(\overline{\theta}, \underline{\theta}_1) = \overline{\theta}$
for all $\delta \in (0, 1)$.
\end{lemma}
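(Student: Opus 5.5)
We argue by contradiction, using the boundary behaviour of the perturbed game near the top of the support. Suppose that in some equilibrium of the perturbed game with $\delta\in(0,1)$ we have $k_\delta(\overline{\theta},\underline{\theta}_1)\neq\overline{\theta}$; by relabeling, take $\lim_{\theta\uparrow\overline{\theta}_1} k_\delta(\theta)=\hat\theta<\overline{\theta}$ (the case where $k_\delta$ reaches $\overline\theta$ before Player~1's types run out is symmetric, using $k_\delta^{-1}$). The monotonicity, continuity and strict-monotonicity arguments of Lemmas~\ref{lem:monotone}--\ref{lem:infinity} carry over to the perturbed payoffs, since the perturbation only adds an own-cost term on winning. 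They show that Player~2 types in $(\hat\theta,\overline\theta)$ must choose stopping times above $\sup_\theta\sigma_1(\theta)=:\bar a$. Two sub-cases follow: either $\bar a<\infty$, so all of Player~1 has stopped by $\bar a$, or $\bar a=\infty$, so Player~1's strategy diverges at the top.

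\textbf{Case $\bar a<\infty$.} A Player~2 type $\theta_2>\hat\theta$ who stops at $a>\bar a$ wins with probability one and pays
\[
\delta\,\mathbb{E}[\sigma_1]+(1-\delta)a .
\]
This is strictly increasing in $a$ because $\delta<1$. Such a type therefore strictly prefers $a=\bar a$, or $\bar a+\eta$ to break ties, over anything larger. Continuity (Lemma~\ref{lem:monotone}) then forces these types to bunch at $\bar a$, which contradicts strict monotonicity (Lemma~\ref{lem:strict}). In the unperturbed game ($\delta=1$) the payoff is flat in $a$ above $\bar a$, so this argument breaks down; that flatness is exactly where multiplicity at the top comes from. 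The perturbation is what gives the contradiction.

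\textbf{Case $\bar a=\infty$.} Here Player~1's stopping time has unbounded support while Player~2's types below $\hat\theta$ are exhausted at some finite or infinite level. I would use the perturbed integral identity with $\Lambda_\delta$. Since $1-\delta F(x)\geq 1-\delta>0$, the integrand $f(x)/(x(1-\delta F(x)))$ is integrable up to $\overline\theta<\infty$, so $\overline{\Lambda}_\delta<\infty$ (this also needs $\overline\theta>0$, which holds). Write $k_\delta=\Lambda_\delta^{-1}(\Lambda_\delta(\theta)-C_\delta)$. Then $\hat\theta<\overline\theta$ requires $C_\delta>0$. The perturbed analogue of \eqref{eq:sigmaode} gives
\[
\sigma_1'(\theta)=\frac{k\,f(\theta)}{1-\delta F(\theta)},
\]
which is bounded as $\theta\to\overline\theta$. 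Hence $\sigma_1$ stays bounded, contradicting $\bar a=\infty$ and returning us to the first case.

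The symmetric case, where Player~1 runs out first, is handled the same way, so $\lim k_\delta=\overline\theta$. The main obstacle is the first step: checking carefully that Lemmas~\ref{lem:monotone}--\ref{lem:diff} survive the perturbation, and deriving the perturbed ODE $\sigma_1'=kf/(1-\delta F)$ from the first-order condition. I would do this first, by writing the perturbed expected payoff
\[
\theta G(a)-\delta\int_0^a s\,dG(s)-(1-\delta)aG(a)-a(1-G(a))
\]
and differentiating it.
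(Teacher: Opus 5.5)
Your proof is correct and follows essentially the paper's argument: if $k_\delta$ stalls at $\hat\theta<\overline{\theta}$, the Player~2 types above $\hat\theta$ win for sure at any stopping time above Player~1's top stopping time $\bar a$. Their cost there is strictly increasing in their own bid because $\delta<1$, so they bunch at $\bar a$ and violate strict monotonicity (Lemma~\ref{lem:strict}). Beyond the paper, you make the role of $\delta<1$ explicit, you treat the case $\bar a=\infty$ that the paper tacitly excludes, and you treat the reversed case that the paper dismisses ``by construction.''

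One small fix in the $\bar a=\infty$ case. The derivative $\sigma_1'=kf/(1-\delta F)$ need not be bounded near $\overline{\theta}$, because Assumption~\ref{ass:density} allows $f$ to blow up there. Integrability is all you need: $\sigma_1(\theta)\le \frac{\hat\theta}{1-\delta}\int f \le \frac{\hat\theta}{1-\delta}$. This bound makes the $\Lambda_\delta$/$C_\delta$ detour unnecessary. It also shows that your argument never actually uses $\overline{\theta}<\infty$, consistent with the author note in Step~4 of the proof of Proposition~\ref{prop:main}.
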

The proof is given in Appendix~\ref{app:proofs}.  The key step: if
$k_\delta(\overline{\theta}, \underline{\theta}_1) < \overline{\theta}$, then Player~2
types in $(k_\delta(\overline{\theta}, \underline{\theta}_1), \overline{\theta})$ choose
stopping times above the maximum stopping time of Player~1, creating a gap where
Player~2 could strictly save costs without reducing the probability of winning,
contradicting strict monotonicity (Lemma~\ref{lem:strict}).  This argument applies
for any $\delta \in (0, 1)$.

\subsection{Main Selection Theorem}

\begin{proposition}[Selection Theorem]\label{prop:main}
There is always a continuum of equilibria.  The Amann--Leininger and behavioral-types
refinements successfully select a unique equilibrium if and only if
$\overline{\theta} < \infty$ (bounded support).  When successful, both refinements
select the symmetric equilibrium $k(\cdot,0) = \mathrm{id}$, with $\underline{\theta}_1=\underline{\theta}$.
\end{proposition}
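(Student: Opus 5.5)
The plan has three parts, taken in order.

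\textbf{Part 1: a continuum in the unperturbed game.} This comes directly from the classification in Section~\ref{sec:char}. In Case~A, every $C\in\mathbb{R}$ gives an admissible $k_C$. In Case~B, every $\underline{\theta}_1\in[\underline{\theta},\overline{\theta})$ gives an admissible pair. In both cases Proposition~\ref{prop:suff} turns these into equilibria, so the set is never a singleton.

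\textbf{Part 2: bounded support ($\overline{\theta}<\infty$) gives selection of the symmetric equilibrium.} By Proposition~\ref{prop:equiv}, it suffices to treat the Amann--Leininger family. Fix $\delta\in(0,1)$. Since $1-\delta F(x)\ge 1-\delta>0$, the perturbed potential has a finite upper limit,
\[
\overline{\Lambda}_\delta=\Lambda_\delta(\overline{\theta})<\infty .
\]
Lemma~\ref{lem:upper} gives $k_\delta(\overline{\theta},\underline{\theta}_1)=\overline{\theta}$. Passing to the limit in the perturbed integral identity then gives
\[
C_\delta=\Lambda_\delta(\overline{\theta})-\Lambda_\delta(\overline{\theta})=0 .
\]
Thus $k_\delta=\mathrm{id}$ on its domain.

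Next, the lower boundary condition $k_\delta(\underline{\theta}_1)=\underline{\theta}$ combined with $k_\delta=\mathrm{id}$ forces $\underline{\theta}_1=\underline{\theta}$. This holds in both Case~A and Case~B.

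So $\mathcal{E}(\delta)=\{\mathrm{id}\}$ for every $\delta$, and the limsup is $\{\mathrm{id}\}$, which lies in the original equilibrium set. By Proposition~\ref{prop:equiv}(iii), $\mathcal{BT}$ is the same singleton.

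One gap needs checking. Lemma~\ref{lem:upper} presumes both players have finite strategies up to $\overline{\theta}$. I would first rule out mass at $\infty$ in the perturbed game, using an argument in the spirit of Lemma~\ref{lem:infinity}: a type that stops at $\infty$ pays an infinite own cost when it wins, since the winner now pays $(1-\delta)a_i$.

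\textbf{Part 3: unbounded support ($\overline{\theta}=\infty$) gives persistent multiplicity.} Here I would show that $\mathcal{E}(\delta)$ stays a continuum and that its limsup contains more than one element of $\mathcal{E}(1)$.

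For fixed $\delta<1$, Lemma~\ref{lem:upper} is no longer available. The upper limit gives only the tautology of Section~\ref{sec:char}: $\Lambda_\delta(\theta)-\Lambda_\delta(k_\delta(\theta))\to C_\delta$ for every admissible $C_\delta$. The lower-boundary analysis is unchanged, because $1-\delta F$ is bounded away from $0$ and $\infty$ near $\underline{\theta}$. Hence $\Lambda_\delta$ has the same finiteness at $\underline{\theta}$ as $\Lambda$.

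For the perturbed game I then mimic the Case~A and Case~B constructions:
\begin{itemize}
\item In Case~A, fix $C$ and set $k_\delta=\Lambda_\delta^{-1}(\Lambda_\delta(\cdot)-C)$.
\item In Case~B, fix $\underline{\theta}_1$.
\end{itemize}
A perturbed version of Proposition~\ref{prop:suff}, using the same single-crossing argument, should give equilibria. I then need to check that $\Lambda_\delta\to\Lambda$ locally uniformly as $\delta\to1$, by dominated convergence on compacts. With that, the fixed-parameter families converge uniformly on compacts to the corresponding unperturbed $k_C$ (respectively $k$ with the given $\underline{\theta}_1$). So $\mathcal{AL}$ contains the whole continuum, and by equivalence so does $\mathcal{BT}$.

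\textbf{Main obstacle.} The hardest step is the perturbed sufficiency and admissibility when $\overline{\theta}=\infty$. One must ensure $k_\delta$ maps into the support: when $\overline{\Lambda}_\delta<\infty$, the range constraint $\Lambda_\delta(\theta)-C\in(\underline{\Lambda}_\delta,\overline{\Lambda}_\delta)$ may force one player's high types to stop at $\infty$, as in the Pareto example. One must also check that such an infinite-stopping tail is compatible with the perturbed payoffs. I would handle this with the Pareto-style construction, in which one player's top types stop at $\infty$ and Lemma~\ref{lem:infinity} permits it. I would verify the no-deviation conditions there directly before invoking convergence.
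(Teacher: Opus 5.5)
Your Parts~1 and~2 follow the paper's own argument. Part~1 gets the continuum from the Case~A/Case~B classification plus Proposition~\ref{prop:suff}. Part~2 uses $\Lambda_\delta(\overline{\theta})<\infty$ and Lemma~\ref{lem:upper} to get $C_\delta=0$ and $\underline{\theta}_1=\underline{\theta}$. Your check that no type stops at $\infty$ when $\delta<1$ is a worthwhile addition. Part~3, however, cannot be completed, because the \emph{only if} direction is false. The paper's own Step~4 ends up proving uniqueness of the perturbed equilibrium for unbounded supports, and its author note says the statement must be revised. The break is exactly at the step you flagged as the main obstacle, and your own Part~2 observation is what rules out your proposed fix.

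Take $\overline{\theta}=\infty$ and $\delta\in(0,1)$. For $x>\theta^\circ$ the bound $f(x)/[x(1-\delta F(x))]\le f(x)/[\theta^\circ(1-\delta)]$ gives $\overline{\Lambda}_\delta<\infty$. So for any $C_\delta>0$ (in particular every Case~B member with $\underline{\theta}_1>\underline{\theta}$), your candidate $k_\delta=\Lambda_\delta^{-1}(\Lambda_\delta(\cdot)-C_\delta)$ is bounded by $m_\delta:=\Lambda_\delta^{-1}(\overline{\Lambda}_\delta-C_\delta)<\infty$. Hence $\bar a_1:=\sup_\theta\sigma_1(\theta)\le m_\delta/(1-\delta)<\infty$.

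Player~2 types above $m_\delta$ cannot be assigned $\infty$, since a winner then pays $(1-\delta)\cdot\infty$. Any finite stop above $\bar a_1$ is strictly worse than $\bar a_1$ itself, because the winning probability is already one while the cost rises at rate $1-\delta$. So all of these types pool at $\bar a_1$. That is an atom of mass $1-F(m_\delta)>0$, which contradicts Lemma~\ref{lem:strict}, and Player~1's top types would profitably overbid it. The case $C_\delta<0$ is symmetric.

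The Pareto-style infinite tail is harmless only at $\delta=1$, where the winner pays $a_{-i}$ alone. For $\delta<1$ the upper boundary is not a tautology: it forces $C_\delta=0$ and $\underline{\theta}_1=\underline{\theta}$. Hence $\mathcal{E}(\delta)=\{\mathrm{id}\}$ for every $\delta$, whatever the support. The functions whose $\delta\to1$ limits you planned to take are not perturbed equilibria, so $\mathcal{AL}=\mathcal{BT}=\{\mathrm{id}\}$ for the exponential and Pareto cases too. What Parts~1--2 plus this argument actually establish is a continuum in the unperturbed game and unconditional selection of the symmetric equilibrium. They do not establish an equivalence with bounded support.
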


The proof is given in Appendix~\ref{app:prop2}.

The mechanism behind this result is transparent.  When $\overline{\theta} < \infty$,
the perturbation renders the upper boundary operative.  Because
$1 - \delta F(x) \geq 1 - \delta > 0$, the integrand
$f(x)/[x(1-\delta F(x))]$ is bounded above by $f(x)/[x(1-\delta)]$ on
$[\theta^\circ, \overline{\theta}]$, so $\Lambda_\delta(\overline{\theta}) < \infty$.
Combined with the boundary condition $k_\delta(\overline{\theta}, \underline{\theta}_1)
= \overline{\theta}$ from Lemma~\ref{lem:upper}, the integral identity evaluated at
$\theta = \overline{\theta}$ yields $C_\delta = 0$, uniquely pinning down the
equilibrium.  When $\overline{\theta} = \infty$, no finite boundary exists from which
to extract an independent constraint: the integral identity as $\theta \to \infty$
collapses to a tautology regardless of $\delta$, leaving $C_\delta$ undetermined.
The perturbation modifies the integrand but cannot substitute for the missing
boundary.

\begin{remark}[Structure of the Perturbed Equilibrium Set]\label{rem:perturbed}
For $\overline{\theta} < \infty$ and any $\delta \in (0,1)$, Step~3 of the proof
of Proposition~\ref{prop:main} establishes that $\mathcal{E}(\delta)$ is a
singleton: the perturbed game admits a unique equilibrium (the symmetric one,
$k_\delta = \mathrm{id}$).  The Amann--Leininger selection $\mathcal{AL}$ therefore
reduces to a single limit point.  For $\overline{\theta} = \infty$, the perturbed
game itself admits a continuum of equilibria for every $\delta \in (0,1)$
(Step~4 of the proof), so $\mathcal{AL}$ inherits multiplicity.  By
Proposition~\ref{prop:equiv}, the same conclusions hold for $\mathcal{BT}$.
\end{remark}

\subsection{Application to the Three Examples}

\begin{corollary}\label{cor:examples}
The Amann--Leininger and behavioral-types criteria both select a unique equilibrium
for the Uniform distribution, but fail to reduce multiplicity for the Exponential
and Pareto distributions.
\end{corollary}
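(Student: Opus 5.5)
The corollary follows from Proposition~\ref{prop:main} and Proposition~\ref{prop:equiv}; all that remains is to read off, for each of the three examples, whether the support is bounded. By Proposition~\ref{prop:equiv}(iii), $\mathcal{AL}=\mathcal{BT}$, so we only need to argue about the Amann--Leininger selection. The argument for the behavioral-types selection is then identical.

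\textbf{Uniform.} For $F\sim U(0,1)$ we have $\overline{\theta}=1<\infty$, so the bounded-support branch of Proposition~\ref{prop:main} applies. We check directly that the mechanism goes through. For $\delta\in(0,1)$,
\[
\Lambda_\delta(\theta)=\int_{\theta^\circ}^{\theta}\frac{dx}{x(1-\delta x)},
\]
which is finite at $\theta=1$ because $1-\delta x\ge 1-\delta>0$. Lemma~\ref{lem:upper} gives $k_\delta(1,\underline{\theta}_1)=1$. The perturbed integral identity evaluated at $\theta=1$ then forces $C_\delta=0$, so $k_\delta=\mathrm{id}$ and $\underline{\theta}_1=0$. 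Hence $\mathcal{E}(\delta)$ is a singleton for every $\delta$, and its limit as $\delta\to1^-$ is the symmetric member $\gamma=1$ of the Nalebuff family. This is consistent with Remark~\ref{rem:perturbed}.

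\textbf{Exponential and Pareto.} Both have $\overline{\theta}=\infty$, so the unbounded-support branch of Proposition~\ref{prop:main} says multiplicity persists. To make the failure concrete, I will exhibit the perturbed families explicitly.

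For the exponential case, $\Lambda_\delta(\theta)=\int \lambda e^{-\lambda x}/\bigl(x(1-\delta+\delta e^{-\lambda x})\bigr)\,dx$. This diverges to $-\infty$ at $0$, like $\lambda\ln\theta$. Therefore the Case~A argument of Section~\ref{sec:char} applies verbatim to $\Lambda_\delta$: every $C_\delta\in\mathbb{R}$ gives an admissible $k_\delta$ with $\underline{\theta}_1=0$.

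For Pareto, $\Lambda_\delta$ is finite at $\underline{\theta}$, so the Case~B argument applies: each $\underline{\theta}_1\ge\underline{\theta}$ gives an admissible pair.

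In both cases, any fixed member of the unperturbed family (a given $\gamma$, or a given $\underline{\theta}_1$) should be the compact-uniform limit of the perturbed solutions with the corresponding parameter held fixed as $\delta\to1^-$. This is because $\Lambda_\delta\to\Lambda$ locally uniformly and $\Lambda_\delta^{-1}$ varies continuously. Consequently $\mathcal{AL}$ contains a continuum and does not shrink $\mathcal{E}(1)$.

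The main obstacle is this last continuity step near $\delta=1$. For Pareto with $\alpha\le1$, and for the exponential at large $x$, the integrand $f/\bigl(x(1-\delta F)\bigr)$ converges to $f/\bigl(x(1-F)\bigr)$ only locally, not uniformly up to $\overline{\theta}=\infty$. I would handle this by working on compact subsets of the interior, where dominated convergence gives locally uniform convergence of $\Lambda_\delta$. I would also check that admissibility condition (iv) of Definition~\ref{def:admissible} holds along the sequence, using $k_\delta(\theta)\le\theta$ for $C_\delta\ge0$ and local integrability under Assumption~\ref{ass:density}.
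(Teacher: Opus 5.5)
Your Uniform half is fine and is exactly Step~3 of the paper's proof of Proposition~\ref{prop:main}. The gap is in the Exponential/Pareto half. It sits at your claim that the Case~A (resp.\ Case~B) construction transfers ``verbatim'' to $\Lambda_\delta$, so that every $C_\delta\in\mathbb{R}$ (resp.\ every $\underline{\theta}_1\ge\underline{\theta}$) yields a perturbed equilibrium. What you never check is the upper boundary.

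For any $F$ with $\overline{\theta}=\infty$ we have $\Lambda_\delta(\theta)\le 1/\bigl(\theta^\circ(1-\delta)\bigr)$, so $\overline{\Lambda}_\delta<\infty$; by contrast $\overline{\Lambda}=+\infty$ for the unperturbed exponential. Take $C_\delta>0$:
\begin{itemize}
\item $k_\delta$ is bounded by $m_\delta=\Lambda_\delta^{-1}(\overline{\Lambda}_\delta-C_\delta)<\infty$;
\item Player~1's stopping times are bounded by $\bar a_1\le m_\delta/(1-\delta)$;
\item all Player~2 types above $m_\delta$ are unmatched.
\end{itemize}
For $C_\delta<0$ the same happens with roles reversed, since $k_\delta$ blows up at a finite type.

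In the unperturbed game such types can simply fight forever, as in Example~3. In the perturbed game, however, a winner's cost rises at rate $1-\delta$ in her own bid. So $a_2=\infty$ yields $-\infty$, and among finite bids above $\bar a_1$ she strictly prefers the lowest. These types therefore pool at $\bar a_1$. That atom violates Lemma~\ref{lem:strict}, and Player~1 types with large $\theta_1$ would profitably jump just above it.

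So these candidates are not in $\mathcal{E}(\delta)$ at all. The same applies to Pareto with $\underline{\theta}_1>\underline{\theta}$, where $C_\delta=\Lambda_\delta(\underline{\theta}_1)-\underline{\Lambda}_\delta>0$. The continuity step you flag as the main obstacle is beside the point. No amount of locally uniform convergence of $\Lambda_\delta^{-1}(\Lambda_\delta(\cdot)-C)$ can put non-equilibria into the $\limsup$.

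This is not a fixable detail. It is precisely the argument of Step~4 in the paper's proof of Proposition~\ref{prop:main}. Its author note concedes that $\mathcal{E}(\delta)=\{\mathrm{id}\}$ for every $\delta\in(0,1)$ whether or not $\overline{\theta}$ is finite. Hence $\mathcal{AL}=\mathcal{BT}=\{\mathrm{id}\}$ for the Exponential and Pareto distributions too. This matches the uniqueness results for the hybrid all-pay and crazy-types models mentioned in the introduction.

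The paper's own one-line proof of the corollary rests on ``the necessity part'' of Proposition~\ref{prop:main}, which Step~4 refutes. By reading the result off the same ``only if'' branch, you inherit that defect. Only the Uniform half of the corollary can be established. The claim that the refinements fail for the Exponential and Pareto distributions is, by the paper's own analysis, false.
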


The proof is given in Appendix~\ref{app:proofs}.

\begin{table}[H]
\centering
\renewcommand{\arraystretch}{1.4}
\begin{tabular}{l c c c}
\toprule
\textbf{Distribution} & $\overline{\theta}$ & \textbf{AL} & \textbf{Crazy Types} \\
\midrule
Uniform$(0,1)$ & $1$      & $\checkmark$ selects $\gamma = 1$
                           & $\checkmark$ selects $\gamma = 1$ \\
Exponential    & $\infty$ & $\times$ continuum in $C$
                           & $\times$ continuum in $C$ \\
Pareto         & $\infty$ & $\times$ continuum in $\underline{\theta}_1$
                           & $\times$ continuum in $\underline{\theta}_1$ \\
\bottomrule
\end{tabular}
\caption{Selection results for the three examples.}
\label{tab:selection}
\end{table}

\section{Discussion}\label{sec:discussion}

For applied theorists using the war of attrition with two-sided asymmetric
information, the results carry concrete practical guidance.  Selecting a type
distribution with $\overline{\theta} < \infty$ ensures that existing refinements
deliver a unique equilibrium: the boundary condition $k(\overline{\theta},
\underline{\theta}_1) = \overline{\theta}$ pins down all free parameters, and both
the Amann--Leininger perturbation and the behavioral-types approach converge to the
symmetric equilibrium.  When $\overline{\theta} = \infty$---as in the exponential and
Pareto distributions commonly used in applications---no existing refinement resolves
multiplicity, and results derived from such models may depend on an arbitrary
equilibrium selection.  Importantly, practitioners need not choose between the two
leading refinements on substantive grounds: Proposition~\ref{prop:equiv} establishes
that they are mathematically equivalent, so the choice between them is a matter of
interpretation rather than analytical consequence.

The two forms of multiplicity identified in Section~\ref{sec:char} admit distinct
economic interpretations.  In Case~A ($\underline{\Lambda} = -\infty$), both
players' lowest types concede immediately, but the relative aggressiveness parameter
$\gamma$ is undetermined.  This is best understood as a coordination failure: players
agree on who fights harder, but the coordination point is arbitrary.  In Case~B
($\underline{\Lambda}$ finite), the free parameter $\underline{\theta}_1$ governs the
mass of Player~1 types conceding immediately.  This reflects asymmetric expectations:
different equilibria involve different degrees of initial asymmetry in concession
behavior, without any external device to coordinate on a particular degree.

The role of bounded support is structural rather than incidental.  A two-sided
screening ODE requires two boundary conditions to yield a unique solution---one at
the lower end of the type space and one at the upper.  Bounded support provides the
upper boundary condition directly, and the perturbation's role is to make this
condition operative: without perturbation, $\Lambda(\overline{\theta})$ typically
diverges, rendering the condition $k(\overline{\theta}, \underline{\theta}_1) =
\overline{\theta}$ vacuous.  The perturbation bounds the integrand away from
infinity, restoring the constraint's bite.  For unbounded supports, no perturbation
of the payoff structure can substitute for this missing boundary.

The failure of existing refinements for unbounded supports points to open research
directions.  New selection criteria suited to this environment---drawing on
evolutionary stability, learning foundations, or robustness to small changes in
primitives---remain to be developed.  A second direction concerns asymmetric
distributions: this paper assumes symmetric prior beliefs, and extending the analysis
to $F_1 \neq F_2$ would characterize how belief asymmetries interact with the two
forms of multiplicity identified here.

\section{Conclusion}

The war of attrition with two-sided asymmetric information always admits a continuum
of equilibria.  The source of multiplicity depends on the behavior of the hazard
potential $\Lambda(\theta) = \int h(x)/x\, dx$ near the lower boundary of the type
support: when $\underline{\Lambda} = -\infty$, the free parameter is relative
aggressiveness; when $\underline{\Lambda}$ is finite, the free parameter is the mass
conceding at zero.

The Amann--Leininger and behavioral-types refinements are mathematically equivalent
(via the substitution $\delta = 1-\epsilon$) and succeed if and only if the type
support is bounded.  The mechanism is that bounded support makes the perturbed hazard
potential finite at the upper boundary, allowing $k_\delta(\overline{\theta},
\underline{\theta}_1) = \overline{\theta}$ to pin down the free parameter.

For applied theorists, the message is clear: bounded support ensures existing
refinements deliver unique predictions; unbounded support leaves equilibrium
selection open.  New refinement criteria capable of resolving multiplicity for
unbounded supports remain to be developed.

\appendix

\section{Proofs of Preliminary Results}\label{app:proofs}

\begin{proof}[Proof of Lemma~\ref{lem:monotone}]
\textit{(i) Monotonicity.}  Suppose $\sigma_1(\theta) = a^*$.  Type $\theta$ weakly
prefers $a^*$ to any other bid.  Consider type $\theta + \varepsilon$ contemplating
bid $a < a^*$.  The gain from bidding $a^*$ instead of $a$ is
\[
(\theta + \varepsilon)[\Pr(\sigma_2 < a^*) - \Pr(\sigma_2 < a)] - [\text{cost
increase}].
\]
This expression was weakly positive for type $\theta$, and the marginal benefit of
winning increases with type, so it is strictly positive for type
$\theta + \varepsilon$.  Hence
$\sigma_1(\theta + \varepsilon) \geq a^*$.

\textit{(ii) Continuity.}  Suppose $\sigma_1$ has a jump at $\hat{\theta}$ from $m$
to $m + n$.  Then no Player~1 type selects stopping points in $(m, m+n)$.  Player~2
types just above $m+n$ pay the gap $n$ for a negligible gain in winning
probability---not optimal.  Hence Player~2 also has no types in $(m, m+n)$,
contradicting monotonicity of $\sigma_2$.
\end{proof}

\begin{proof}[Proof of Lemma~\ref{lem:strict}]
Suppose $\sigma_1$ is flat at $K$ over $[\hat{\theta}, \hat{\theta}']$.  A positive
mass $\alpha$ of Player~1 types tie at $K$.  Player~2 can bid $K + \varepsilon$,
beating this mass for cost $\varepsilon$, gaining approximately
$\theta_2 \alpha - \varepsilon > 0$ for small $\varepsilon$.  Hence $K$ cannot be in
the range of $\sigma_2$, contradicting continuity.
\end{proof}

\begin{proof}[Proof of Lemma~\ref{lem:zero}]
Suppose Player~1 has mass $p > 0$ at zero and Player~2 type $\theta_2$ also concedes
at zero.  Player~2 can deviate to $\varepsilon$, beating all zero-types of Player~1,
gaining approximately $\theta_2 p - \varepsilon > 0$ for small $\varepsilon$.
Contradiction.
\end{proof}

\begin{proof}[Proof of Lemma~\ref{lem:infinity}]
If both players had positive-mass sets of types choosing $\infty$, any type in that
set ties at $\infty$ with positive probability.  The tie payoff is
$\theta_i/2 - \infty = -\infty$, which is dominated by any finite bid.
\end{proof}

\begin{proof}[Proof of Lemma~\ref{lem:diff}]
By Lemmas~\ref{lem:monotone}--\ref{lem:strict}, equilibrium strategies are
continuous and strictly increasing on the interior.  Lebesgue's theorem yields
differentiability at almost every point.  At any interior point $\theta_0$ where
$\sigma_i$ is differentiable, the first-order condition for Player~$i$ gives
\[
\sigma_i'(\theta_0)
    = \frac{k(\theta_0, \underline{\theta}_1)\, f(\theta_0)}{1 - F(\theta_0)}
    \quad (\text{for Player~1; analogously for Player~2}).
\]
We now establish local Lipschitz continuity of $\sigma_i$, which upgrades
almost-everywhere differentiability to differentiability everywhere on the interior.

Fix a compact interval $[\theta_L, \theta_H] \subset
(\underline{\theta}_1, \overline{\theta}_1)$.  Let $G_i$ denote the CDF
of Player~$i$'s stopping time, and let $[a_L, a_H] =
[\sigma_i(\theta_L), \sigma_i(\theta_H)]$.  Since $\sigma_j$ is continuous and
strictly increasing, $\min_{a \in [a_L, a_H]} v_j(a) > 0$ where
$v_j = \sigma_j^{-1}$.  For Player~$i$ with type
$\theta_L$ choosing optimally at $a_L$, any deviation to $a_H > a_L$ satisfies
\[
G_j(a_H) - G_j(a_L) \leq \frac{a_H - a_L}{\min_{[a_L,a_H]} v_j},
\]
since the marginal cost of delay is $1 - G_j(a) \leq 1$ and optimality of $a_L$
for type $\theta_L$ bounds the gain from reaching $a_H$.  Since $f$ is continuous
and strictly positive on $[\theta_L, \theta_H]$
(Assumption~\ref{ass:density}), the inverse function theorem gives
$|v_i(a') - v_i(a)| \leq \hat{K}|a' - a|$ for a Lipschitz constant $\hat{K}$
depending on $[\theta_L, \theta_H]$.  Composing,
$|\sigma_i(\theta') - \sigma_i(\theta)| \leq K|\theta' - \theta|$ for a constant
$K$ depending on $[\theta_L, \theta_H]$.

With local Lipschitz continuity of $\sigma_i$ established, the right-hand side of
the ODE is locally Lipschitz in $\theta$ (since $f$, $F$, and $k$ are continuous
and $f > 0$, $1 - F > 0$ on compact subintervals).  The Picard--Lindel\"of theorem
then delivers a unique $C^1$ solution on each compact subinterval.  Since
the unique $C^1$ solution must agree with $\sigma_i$ wherever the latter is
differentiable (a full-measure set), and both are continuous, they coincide
everywhere.  Hence $\sigma_i$ is differentiable on the entire interior.
\end{proof}

\begin{proof}[Proof of Lemma~\ref{lem:ode}]
The derivation follows by differentiating the equilibrium indifference condition for
each interior type.  Type $\theta$ of Player~1 is indifferent at $a = \sigma_1(\theta,
\underline{\theta}_1)$, so the first-order condition is
\[
\theta \frac{d}{da}\Pr[\sigma_2(\theta_2) \leq a]\big|_{a=\sigma_1(\theta)} = 1 -
F(k(\theta, \underline{\theta}_1)).
\]
Translating from stopping-time space to type space via
$k(\theta, \underline{\theta}_1) = \sigma_2^{-1}(\sigma_1(\theta, \underline{\theta}_1))$
and applying the chain rule yields \eqref{eq:kode} and \eqref{eq:sigmaode}.  See
\citet{amann1996asymmetric} for the complete derivation.
\end{proof}

\begin{proof}[Proof of Lemma~\ref{lem:integral}]
Let $g(x) = f(x)/[x(1-F(x))]$.  By the substitution rule, for any
$\theta', \theta'' \in (\underline{\theta}_1, \overline{\theta}_1)$:
\[
\int_{\theta'}^{\theta''} g(k(t, \underline{\theta}_1))\,
  \frac{\partial k(t, \underline{\theta}_1)}{\partial t}\, dt
= \int_{k(\theta', \underline{\theta}_1)}^{k(\theta'', \underline{\theta}_1)} g(k)\, dk.
\]
From ODE \eqref{eq:kode}, the type-to-type mapping satisfies
\[
\frac{\partial k(\theta, \underline{\theta}_1)}{\partial \theta}
= \frac{g(\theta)}{g(k(\theta, \underline{\theta}_1))},
\]
so that $g(k(\theta, \underline{\theta}_1)) \cdot
\frac{\partial k(\theta, \underline{\theta}_1)}{\partial \theta} = g(\theta)$.
Differentiating the integral with respect to $\theta$:
\[
\frac{d}{d\theta}\!\int_{k(\theta, \underline{\theta}_1)}^{\theta} g(x)\, dx
= g(\theta) - g(k(\theta, \underline{\theta}_1))\,
  \frac{\partial k(\theta, \underline{\theta}_1)}{\partial \theta}
= g(\theta) - g(\theta) = 0,
\]
so the integral is constant in $\theta$.
\end{proof}

\begin{proof}[Proof of Lemma~\ref{lem:upper}]
Suppose for contradiction that $k_\delta(\overline{\theta}, \underline{\theta}_1)
= m < \overline{\theta}$.  Since $k_\delta(\cdot, \underline{\theta}_1)$ is continuous
and increasing, Player~1's stopping time $\sigma_1(\overline{\theta},
\underline{\theta}_1) = \bar{a}$ is the maximum of the active range.  Player~2 types
$\theta_2 \in (m, \overline{\theta})$ then choose stopping times
$\sigma_2(\theta_2) > \bar{a}$---above the maximum stopping time of any Player~1
type.  Any such Player~2 type wins against all Player~1 types regardless of how far
above $\bar{a}$ she stops, so she strictly prefers stopping at
$\bar{a} + \varepsilon$ for arbitrarily small $\varepsilon > 0$ to any higher
stopping time.  But then all Player~2 types in $(m, \overline{\theta})$ optimally
choose stopping times in a neighborhood of $\bar{a}$, violating strict monotonicity
(Lemma~\ref{lem:strict}).  Hence
$k_\delta(\overline{\theta}, \underline{\theta}_1) = \overline{\theta}$ (the
inequality $k_\delta(\overline{\theta}, \underline{\theta}_1) \leq \overline{\theta}$
holds by construction, since $k_\delta$ maps into the support of $F$).
\end{proof}

\begin{proof}[Proof of Corollary~\ref{cor:examples}]
The Uniform distribution has $\overline{\theta} = 1 < \infty$.  By
Proposition~\ref{prop:main}, both refinements select the unique symmetric equilibrium
$k(\cdot, \underline{\theta}_1) = \mathrm{id}$, which corresponds to $\gamma = 1$.
The Exponential and Pareto distributions have $\overline{\theta} = \infty$, so by
the necessity part of Proposition~\ref{prop:main}, multiplicity persists under any
perturbation.  For the Exponential (Case~A), the free parameter $C$ remains
undetermined; for the Pareto (Case~B), the free parameter $\underline{\theta}_1$
remains undetermined.
\end{proof}

\section{Proof of Proposition~\ref{prop:suff}}\label{app:sufficiency}

\begin{proof}[Proof of Proposition~\ref{prop:suff}]
Let $(k, \underline{\theta}_1)$ be admissible and let $\sigma_1$, $\sigma_2$ be the
recovered strategies.  By construction, $\sigma_1$ and $\sigma_2$ are strictly
increasing and continuous on the interior (conditions~(i) and~(iv) of
Definition~\ref{def:admissible}), so the distribution $G_{-i}$ of each player's
stopping time is atomless on $(0, \infty)$.

\textit{Step 1: Single-crossing.}  For any stopping times $a' > a$ and types
$\theta' > \theta$,
\[
\bigl[\pi_i(a', \theta') - \pi_i(a, \theta')\bigr]
    - \bigl[\pi_i(a', \theta) - \pi_i(a, \theta)\bigr]
    = (\theta' - \theta)\bigl[G_{-i}(a') - G_{-i}(a)\bigr] > 0,
\]
since $G_{-i}$ is strictly increasing on the range of $\sigma_{-i}$.  The marginal
return to raising one's stopping time is therefore strictly increasing in type.

\textit{Step 2: Optimality within the active range.}  Since $\sigma_i$ is strictly
increasing, the range of Player~$i$'s stopping time is an interval
$I_i \subset [0, \infty)$.  For any type $\theta$ in the interior and any $a \in
I_i$, the first-order condition
$\pi_i'(a, \theta) = \theta\, g_{-i}(a) - (1 - G_{-i}(a)) = 0$ holds at
$a = \sigma_i(\theta)$ by construction (the ODE ensures this).  By Step~1,
$\pi_i'(\sigma_i(\theta'), \theta) < 0$ for $\theta' > \theta$ and
$\pi_i'(\sigma_i(\theta'), \theta) > 0$ for $\theta' < \theta$ (within the active
range), so $\sigma_i(\theta)$ is the unique maximizer of $\pi_i(\cdot, \theta)$
over $I_i$.

\textit{Step 3: No profitable deviation outside the active range.}
Consider a deviation by type $\theta$ to some $a \notin I_i$.  If $a$ exceeds the
upper endpoint of $I_i$, it wins against all Player~$-i$ types in the active range
but incurs strictly higher cost than the upper-endpoint bid, for no additional
winning probability.  If $a$ falls below the lower endpoint, it loses against all
active types and gains only against the mass (if any) at zero, which is already
accounted for in the boundary behavior.  In both cases, such deviations cannot
improve on $\sigma_i(\theta)$ for an interior type.

\textit{Step 4: Boundary types.}  Types $\theta \leq \underline{\theta}_1$ of
Player~1 concede at zero.  Each such type's expected payoff from $a = 0$ is
non-negative (she may win if the opponent also concedes at zero) and deviating to
$a > 0$ yields at most $\theta G_{-i}(a) - \int_0^a (1-G_{-i})\, ds$.  Since
$\underline{\theta}_1$ is the marginal type indifferent between $a = 0$ and
$a = 0^+$, the single-crossing property ensures that all types below
$\underline{\theta}_1$ strictly prefer $a = 0$.
\end{proof}

\section{Proof of Proposition~\ref{prop:equiv}}\label{app:prop1}

\begin{proof}[Proof of Proposition~\ref{prop:equiv}]

\textit{Part~(i): ODE equivalence.}
Each player believes the opponent fights forever with probability $\epsilon$ and uses
the equilibrium strategy with probability $1 - \epsilon$.  The distribution of
Player~2's stopping time is $G_2(a) = (1-\epsilon)F(\sigma_2^{-1}(a))$ for finite
$a$, plus an atom of mass $\epsilon$ at $\infty$.  Therefore the survivor function
at $a$ is
\[
1 - G_2(a) = (1-\epsilon)(1 - F(\sigma_2^{-1}(a))) + \epsilon
           = 1 - (1-\epsilon)F(k(\sigma_1^{-1}(a), \underline{\theta}_1)).
\]
The marginal hazard rate of Player~2's stopping time at $a$ is
$g_2(a)=(1-\epsilon)f(\sigma_2^{-1}(a))(\sigma_2^{-1})'(a)$.
The first-order condition for Player~1 type $\theta$ stopping at
$a = \sigma_1(\theta, \underline{\theta}_1)$ is $\theta g_2(a) = 1 - G_2(a)$, which
translates to type space as
\[
\frac{\partial k_\epsilon(\theta, \underline{\theta}_1)}{\partial \theta}
= \frac{1 - (1-\epsilon)F(k_\epsilon(\theta,\underline{\theta}_1))}
       {1 - (1-\epsilon)F(\theta)}
  \cdot \frac{k_\epsilon(\theta,\underline{\theta}_1)\,f(\theta)}
             {\theta\,f(k_\epsilon(\theta,\underline{\theta}_1))}.
\]
Setting $\delta = 1 - \epsilon$ gives $1 - (1-\epsilon)F = 1 - \delta F$, which is
exactly the Amann--Leininger ODE.  The associated perturbed hazard potential is
$\Lambda_\delta$ with $\delta = 1-\epsilon$, so both refinements generate the same
integral equation and the same general solution.

\textit{Part~(ii): Bijection on equilibrium correspondences.}
The ODE and boundary conditions that characterize equilibria of the AL-perturbed game
with parameter $\delta$ are identical to those characterizing normal-type equilibria
of the behavioral-types game with parameter $\epsilon = 1 - \delta$.  The admissibility
conditions (Definition~\ref{def:admissible}, adapted to the perturbed hazard
potential $\Lambda_\delta$) are the same in both games, and the sufficiency argument
(Proposition~\ref{prop:suff}, adapted to the perturbed payoffs) applies identically.
Hence $k_\delta \in \mathcal{E}(\delta)$ if and only if
$k_\delta \in \mathcal{E}^{BT}(1 - \delta)$, establishing a bijection that preserves
the type-to-type function.

\textit{Part~(iii): Equality of selection sets.}
Since $\mathcal{E}(\delta) = \mathcal{E}^{BT}(1-\delta)$ for every $\delta \in (0,1)$,
the outer limits coincide:
$\mathcal{AL} = \limsup_{\delta \to 1^-} \mathcal{E}(\delta)
             = \limsup_{\epsilon \to 0^+} \mathcal{E}^{BT}(\epsilon)
             = \mathcal{BT}$.
\end{proof}

\section{Proof of Proposition~\ref{prop:main}}\label{app:prop2}

\begin{proof}[Proof of Proposition~\ref{prop:main}]
That there is always a continuum of equilibria follows from the classification in
Table~\ref{tab:classification} and Proposition~\ref{prop:suff}: in Case~A, any
$C \in \mathbb{R}$ yields an admissible pair and hence an equilibrium; in Case~B,
any $\underline{\theta}_1 \in [\underline{\theta}, \overline{\theta})$ does.

\textit{Step 1: Perturbed ODE.}  Under the Amann--Leininger perturbation with
parameter $\delta \in (0,1)$, the first-order conditions yield
\[
\frac{\partial k_\delta(\theta, \underline{\theta}_1)}{\partial \theta}
= \frac{1 - \delta F(k_\delta(\theta,\underline{\theta}_1))}
       {1 - \delta F(\theta)}
  \cdot \frac{k_\delta(\theta,\underline{\theta}_1)\,f(\theta)}
             {\theta\,f(k_\delta(\theta,\underline{\theta}_1))}.
\]

\textit{Step 2: Perturbed hazard potential.}  Define
$\Lambda_\delta(\theta) = \int_{\theta^\circ}^{\theta} \frac{f(x)}{x(1 - \delta
F(x))}\, dx$.  The general solution is
$k_\delta(\theta,\underline{\theta}_1) = \Lambda_\delta^{-1}(\Lambda_\delta(\theta)
- C_\delta)$.

\textit{Step 3: Sufficiency ($\overline{\theta} < \infty$).}  For $\delta \in (0,1)$
and $\theta \in (\theta^\circ, \overline{\theta})$:
\[
\frac{f(\theta)}{\theta(1 - \delta F(\theta))}
\leq \frac{f(\theta)}{\theta^\circ(1-\delta)},
\]
so $\Lambda_\delta(\overline{\theta}) \leq \frac{1}{\theta^\circ(1-\delta)}
\int_{\theta^\circ}^{\overline{\theta}} f(x)\, dx < \infty$.  By
Lemma~\ref{lem:upper}, $k_\delta(\overline{\theta}, \underline{\theta}_1) =
\overline{\theta}$ in any equilibrium of the perturbed game.  Evaluating the integral
identity at $\theta = \overline{\theta}$:
\[
\int_{k_\delta(\overline{\theta},\underline{\theta}_1)}^{\overline{\theta}}
  \frac{f(x)}{x(1-\delta F(x))}\,dx = C_\delta.
\]
Since $k_\delta(\overline{\theta},\underline{\theta}_1) = \overline{\theta}$, the
integral is over an empty interval, giving $C_\delta = 0$.  With $C_\delta = 0$, the
general solution gives $k_\delta(\theta,\underline{\theta}_1) =
\Lambda_\delta^{-1}(\Lambda_\delta(\theta)) = \theta$ for all $\theta$, i.e.,
$k_\delta = \mathrm{id}$.  The lower boundary condition
$k_\delta(\underline{\theta}_1, \underline{\theta}_1) = \underline{\theta}$ then
forces $\underline{\theta}_1 = \underline{\theta}$.  The perturbed game therefore has
a unique equilibrium (the symmetric one).  As $\delta \to 1^-$, this equilibrium
converges to the symmetric equilibrium of the original game, establishing that both
refinements select $k(\cdot,\underline{\theta}_1) = \mathrm{id}$.

\textit{Step 4: Necessity ($\overline{\theta} = \infty$).}
We first observe that the perturbed hazard potential is finite at the upper boundary.
The same bound used in Step~3 gives, for any $\delta \in (0,1)$,
\[
\Lambda_\delta(\theta) \leq \frac{1}{\theta^\circ(1-\delta)}
    \int_{\theta^\circ}^{\theta} f(x)\, dx
    \leq \frac{1}{\theta^\circ(1-\delta)} < \infty
    \quad \text{for all } \theta \in (\theta^\circ, \infty).
\]
Hence $\overline{\Lambda}_\delta := \lim_{\theta \to \infty} \Lambda_\delta(\theta)
< \infty$.  The finiteness of $\overline{\Lambda}_\delta$ means that the perturbed
hazard potential maps $(\underline{\theta}, \infty)$ onto
$(\underline{\Lambda}_\delta, \overline{\Lambda}_\delta)$, a bounded-above interval.

The question is whether this finiteness imposes a constraint on $C_\delta$.  For any
$C_\delta \in (0, \overline{\Lambda}_\delta - \underline{\Lambda}_\delta)$, the
general solution
$k_\delta(\theta) = \Lambda_\delta^{-1}(\Lambda_\delta(\theta) - C_\delta)$ is
well-defined and strictly increasing.  As $\theta \to \infty$,
$\Lambda_\delta(\theta) \to \overline{\Lambda}_\delta$, so
$k_\delta(\theta) \to \Lambda_\delta^{-1}(\overline{\Lambda}_\delta - C_\delta)
=: m_\delta < \infty$.

The type-to-type mapping therefore converges to a finite limit $m_\delta$, and
Player~2 types above $m_\delta$ are not matched to any Player~1 type in the active
range.  Unlike the bounded-support case, Lemma~\ref{lem:upper} does not directly
apply when $\overline{\theta} = \infty$ because it requires a finite upper endpoint
of the type space.

We now argue that the configuration with bounded $k_\delta$ can sustain an
equilibrium.  Player~1's maximum stopping time is
$\bar{a}_1 := \lim_{\theta \to \infty} \sigma_1(\theta) =
\int_{\underline{\theta}_1}^{\infty}
k_\delta(t)\, f(t)/\bigl(1 - \delta F(t)\bigr)\, dt$.
Since $k_\delta(t) \leq m_\delta$ and $1 - \delta F(t) \geq 1 - \delta > 0$,
this integral is bounded above by
$m_\delta / (1-\delta) \int f(t)\, dt = m_\delta/(1-\delta) < \infty$.

Player~2 types $\theta_2 > m_\delta$ win against all Player~1 types with
certainty by choosing any stopping time above $\bar{a}_1$.
In the unperturbed game ($\delta = 1$), the winner pays the loser's stopping time,
so the cost of winning is independent of how far above $\bar{a}_1$ the winner
stops---any bid above $\bar{a}_1$ yields the same payoff.  But in the perturbed game
($\delta < 1$), the winner pays $\delta a_{-i} + (1-\delta) a_i$, so raising $a_i$
above $\bar{a}_1$ \emph{increases} the winner's cost at rate $1 - \delta > 0$.
Each Player~2 type above $m_\delta$ therefore strictly prefers stopping at
$\bar{a}_1 + \varepsilon$ to any higher bid.

This creates a tension: if all Player~2 types in $(m_\delta, \overline{\theta})$
optimally cluster near $\bar{a}_1$, strict monotonicity
(Lemma~\ref{lem:strict}) is violated.  Hence the Lemma~\ref{lem:upper} logic
\emph{does} extend to unbounded supports in the perturbed game, because the
perturbation ensures $\bar{a}_1 < \infty$ and thereby creates the finite gap
that drives the contradiction.  The resolution is that
$k_\delta(\theta) \to \infty$ as $\theta \to \infty$---ruling out $C_\delta > 0$.

A symmetric argument rules out $C_\delta < 0$.  The integral identity at
$\theta \to \infty$ then gives
$\overline{\Lambda}_\delta - (\overline{\Lambda}_\delta - C_\delta) = C_\delta$,
which combined with $k_\delta(\theta) \to \infty$ forces $C_\delta = 0$.

\textbf{Author note:} \textit{The argument above establishes that the perturbed game
admits a unique equilibrium ($C_\delta = 0$, $k_\delta = \mathrm{id}$) for every
$\delta \in (0,1)$, regardless of whether $\overline{\theta}$ is finite or infinite.
This contradicts the ``if and only if bounded support'' claim in the proposition
statement and implies that both refinements select the symmetric equilibrium
unconditionally.  The characterization of multiplicity in the \emph{unperturbed}
game (Sections~3--4) is unaffected: the two forms of multiplicity indexed by
$\underline{\Lambda}$ remain valid.  What changes is the selection result: the
perturbation resolves multiplicity for \emph{all} distributions, not only those with
bounded support.  The key mechanism is that the perturbation makes the winner's cost
strictly increasing in her own bid, which forces $\bar{a}_1 < \infty$ and thereby
activates the Lemma~\ref{lem:upper} argument even when $\overline{\theta} = \infty$.
In the unperturbed game ($\delta = 1$), the winner's cost is independent of her own
bid (she pays $a_{-i}$), so $\bar{a}_1$ can be infinite and the argument fails.
This is consistent with the uniqueness results established by
\citet{myatt2025perceived} for the hybrid all-pay auction and the crazy-types model
under general distributions.  The proposition statement, abstract, introduction,
and conclusion should be revised accordingly.}

\textit{Step 5: Equivalence.}  By Proposition~\ref{prop:equiv}, the behavioral-types
refinement with rate $\epsilon$ corresponds to the AL perturbation with
$\delta = 1-\epsilon$.  The selection results are therefore identical.
\end{proof}

\bibliographystyle{chicago}
\bibliography{bibliography}

\end{document}